\def\BibTeX{{\rm B\kern-.05em{\sc i\kern-.025em b}\kern-.08em
    T\kern-.1667em\lower.7ex\hbox{E}\kern-.125emX}}
\theoremstyle{definition}
\newtheorem{definition}{Definition}
\theoremstyle{theorem}
\newtheorem{mytheory}{Theorem}
\theoremstyle{proof}
\theoremstyle{remark}
\newtheorem*{remark}{Remark}
\newtheorem{lemma}{Lemma}[section]
	\providecommand\BibTeX{{%
			\normalfont B\kern-0.5em{\scshape i\kern-0.25em b}\kern-0.8em\TeX}}}
\begin{document}
\title{Sequential Recommendation with User Causal Behavior Discovery}
\author{\IEEEauthorblockN{Zhenlei Wang$^{1,2,}$, Xu Chen$^{1,2,^*}$, Rui Zhou$^{1,2}$, Quanyu Dai$^{3}$, Zhenhua Dong$^{3}$, Ji-Rong Wen$^{1,2}$}\thanks{$^*$ Corresponding author.}
\IEEEauthorblockA{\text{$^1$Gaoling School of Artificial Intelligence, Renmin University of China, Beijing, China}  \\
	\text{$^2$Beijing Key Laboratory of Big Data Management and Analysis Methods, Beijing, China} \\
	\text{$^3$Noah's Ark Lab, Huawei, Beijing, China}\\
		wang.zhenlei@foxmail.com, successcx@gmail.com, ruizhou@ruc.edu.cn, daiquanyu@huawei.com, \\dongzhenhua@huawei.com, jrwen@ruc.edu.cn}
}
\maketitle

\begin{abstract}
The key of sequential recommendation lies in the accurate item correlation modeling.
Previous models infer such information based on item co-occurrences, which may fail to capture the real causal relations, and impact the recommendation performance and explainability.
In this paper, we equip sequential recommendation with a novel causal discovery module to capture causalities among user behaviors.
Our general idea is firstly assuming a causal graph underlying item correlations, and then we learn the causal graph jointly with the sequential recommender model by fitting the real user behavior data.
More specifically, in order to satisfy the causality requirement, the causal graph is regularized by a differentiable directed acyclic constraint.
Considering that the number of items in recommender systems can be very large, we represent different items with a unified set of latent clusters, and the causal graph is defined on the cluster level, which enhances the model scalability and robustness.
In addition, we provide theoretical analysis on the identifiability of the learned causal graph.
To the best of our knowledge, this paper makes a first step towards combining sequential recommendation with causal discovery.
For evaluating the recommendation performance, we implement our framework with different neural sequential architectures, and compare them with many state-of-the-art methods based on real-world datasets. Empirical studies manifest that our model can on average improve the performance by about {6.1\% and 11.3\%} on $F_1$ and NDCG, respectively.
To evaluate the model explainability, we build a new dataset with human labeled explanations for both quantitative and qualitative analysis.
\end{abstract}

\begin{IEEEkeywords}
sequential recommendation , causal discovery
\end{IEEEkeywords}

\begin{figure}[t]
	\centering
	\setlength{\fboxrule}{0.pt}
	\setlength{\fboxsep}{0.pt}
	\fbox{
		\includegraphics[width=.97\linewidth]{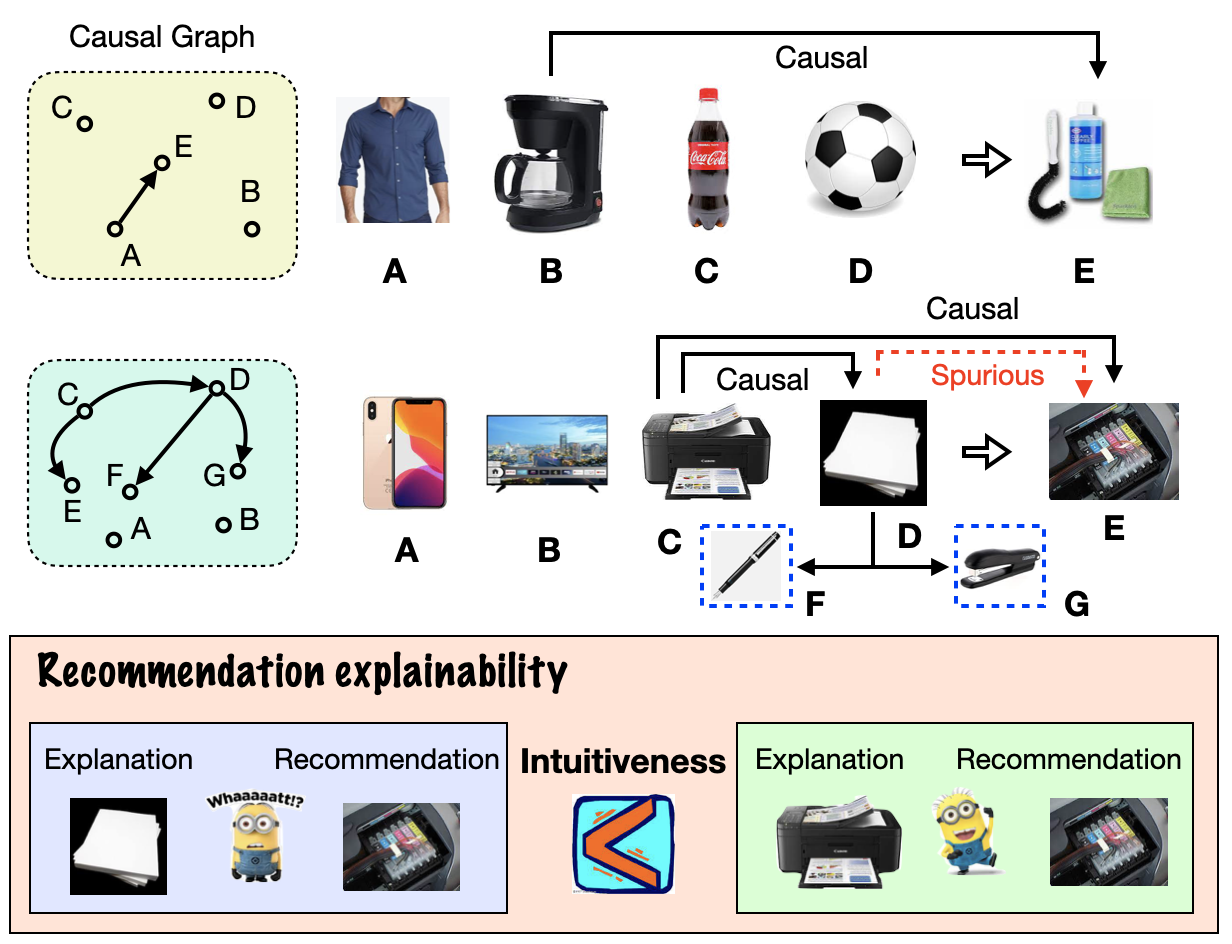}
	}
	\vspace*{-0.1cm}
	\caption{
	{Illustrating our motivation on why modeling causalities among user behaviors is important.
		In the first example, the causally irrelevant items such as the T-shirt and football may confuse the prediction of the pot cleaner, thus removing them may lead to better recommendation performance.
		In the second example, although the paper and ink box are always observed in the same user behavior sequence, they cannot causally influence each other. 
        Basically, their co-occurrence is because they are both causally trigged by the printer.}
	}
	\label{intro}
	\vspace*{-0.2cm}
\end{figure}

\section{Introduction}\label{introduction}
Sequential recommendation is motivated by the fact that user behaviors are always correlated.
For example, in Figure~\ref{intro}, the interaction with a coffee pot may lead to the purchasing of a pot cleaner.
A user buys an ink box because she has previously interacted with a printer.
Such correlations are key to sequential recommendation, and people have introduced different assumptions to model them effectively.
For example, FPMC~\cite{rendle2010factorizing} regards user behaviors as Markov chains, where the current behavior is only influenced by the most recent action.
GRU4Rec~\cite{hidasi2015session} relaxes the Markov assumption, and predicts behaviors by taking more history information into consideration.
STAMP~\cite{liu2018stamp} leverages the attention mechanism to discriminate the importances of different history items.

{Psychological research shows that human behaviors always follow some causal patterns~\cite{rychlak1994logical,rychlak1986logical}.}
For example, a user buys an ink box typically because she has purchased a printer before.
While existing sequential recommender models have achieved many successes, there is no mechanism to capture such essential causalities among user behaviors.
The advantages of causality modeling mainly lie in two folds:
on the one hand, causality is an essential nature on human behavior correlations.
It is stable and intrinsic, which facilitates more informative and robust model learning. 
As exampled in Figure~\ref{intro}, the ink box is caused by the printer, other causally irrelevant items like the phone can be redundant or even confused for predicting the target item.
Notably, there are many attention-based models~\cite{li2017neural,liu2018stamp} for discriminating item correlations.
However, the captured relationships may not well reflect causalities.
In Figure~\ref{intro}, the purchasing of a printer may usually trigger the interactions with the paper and ink box.
Thus, one can always simultaneously observe the paper and ink box in the same user behavior sequences, which makes the attention weight between them very high.
However, these items are causally irrelevant, they are both the results of the printer.
Without the printer, the following item of the paper is more likely to be a pen or stapler, which means the paper is not a good indicator of the ink box.
In this example, the printer is the only direct and reliable cause of the ink box, which is context-invariant.
On the other hand, causality is always aligned with human intuitions, which facilitates more accessible recommendation explanations.
Previous explainable strategies are mostly based on the attention mechanism, which may sometimes lead to unreasonable results.
Also see the example in Figure~\ref{intro}, the attention weights between the paper and ink box can be very high.
However, explaining them with each other is less intuitive, which may confuse the users or even bias their decisions.
Motivated by the above considerations, we would like to ask: ``Can we explicitly capture the causalities among user behaviors when building sequential recommender models?''.

To answer this question, in this paper, we propose a \underline{\textbf{cau}}sality enhanced \underline{\textbf{se}}quential \underline{\textbf{r}}ecommendation framework (called \textbf{Causer} for short).
Our general idea is firstly assuming a causal graph underlying item correlations.
Then, for each item, its causally irrelevant history information is filtered out for deriving more reasonable representations.
Instead of handcrafting the causal graph, we learn it in the optimization process.
The optimal causal graph is expected to well purify the history information, such that user behaviors can be better fitted and explained.
While this idea seems to be feasible, it is non-trivial due to the following difficulties:
(1) to begin with, the number of items can be very large.
Directly building item-level causal graph is intractable and hard to scale.
How to handle the large item space may pose great challenges to realize our idea.
(2) Then there are {few works} on studying user sequential behaviors from the causal discovery perspective.
How to learn the causal graph, and integrate it into sequential recommender models are still open problems.
(3) At last, {even if we can design a causal sequential model}, whether the learned causal graph can be correctly identified is still not clear, which may challenge our idea from the theoretical perspective.

In order to overcome these difficulties, we represent each item by a mixture of the latent clusters based on an encoder-decoder architecture.
Then, we define the causal graph on the cluster level, which greatly reduces the graph size, and thus can be more scalable (corresponding to the above point (1)).
In order to make the causal graph learnable, we apply the idea of NOTEARS~\cite{zheng2018dags} to sequential recommender models, where the causal graph is continuelized and regularized by a directed acyclic constraint (corresponding to the above point (2)).
At last, we theoretically {prove} that the causal graph learned in our framework is identifiable to the true Markov equivalent class under mild assumptions (corresponding to the above point (3)).

In a summary, the main contributions of this paper can be concluded as follows:

$\bullet$ We propose to enhance sequential recommendation by capturing causal relations among user behaviors, which, to the best of our knowledge, is the first time in the recommendation domain.

$\bullet$ To realize the above idea, we design a general framework by infusing a causal discovery module into sequential recommender models.
To make this framework more scalable, we further propose an encoder-decoder architecture to define the causal graph on the cluster-level.

$\bullet$ We provide theoretical analysis on the identifiability of the causal graph learned in our framework.

$\bullet$ We conduct extensive experiments to demonstrate the effectiveness of our framework in promoting the recommendation performance.

$\bullet$ For evaluating the recommendation explainability, we build a dataset by manually labeling the causal relations between different items.

In the following sections, we firstly introduce the preliminaries of this paper in section~\ref{pre}, and then we detail our framework, and present the rationalities of our model designs in section~\ref{cau-model}. In the next, we review the previous work, which are related with our studies in section~\ref{rela}. To demonstrate the effectiveness of our model, we conduct extensive experiments in section~\ref{exper}. The conclusion and outlook come at last in section~\ref{conclu}.

\section{Preliminaries}\label{pre}
\subsection{Sequential Recommendation}
In sequential recommendation, the current user preference is predicted based on the history information.
Suppose we have a user set $\mathcal{U}$ and an item set $\mathcal{V}$.
The interactions\footnote{Here, ``interaction'' is an umbrella term, which can be click, purchase and so on, {and we assume that all the interactions are logged in the dataset.}} between the users and items are chronologically organized into a set $\mathcal{O}=\{(u_k,\vec{\bm{v}}_k^1,\vec{\bm{v}}_k^2,...,\vec{\bm{v}}_k^{l_k})\}_{k=1}^N$, where in each element, $u_k\in \mathcal{U}$ is a user,  $\vec{\bm{v}}_k^{j}\subset \mathcal{V}$ is an item set represented by a $|\mathcal{V}|$-dimensional 0-1 vector.
{$j=1,...,l_k$ corresponds to increasing timestamps.}
$\vec{\bm{v}}_k^1,\vec{\bm{v}}_k^2,...,\vec{\bm{v}}_k^{l_k}$ are sequentially interacted by user $u_k$, and $l_k$ is the sequence length.
This formulation is generally compatible with the ordinary sequential recommendation~\cite{li2017neural} and the next basket recommendation~\cite{rendle2010factorizing}, where for the former case, there is only one ``1'' in $\vec{\bm{v}}_k^{j}$, and in the latter task, $\vec{\bm{v}}_k^{j}$ is a multi-hot vector.
In the optimization process, the log likelihood of observing $(\vec{\bm{v}}_k^1,\vec{\bm{v}}_k^2,...,\vec{\bm{v}}_k^{l_k})$ given $u_k$ is:
	\begin{eqnarray}\label{or-loss}
		\begin{aligned}
			\log{p(\vec{\bm{v}}_k^1,\vec{\bm{v}}_k^2,...,\vec{\bm{v}}_k^{l_k}|u_k)} &= \sum_{j=1}^{l_k} {\log{f(\vec{\bm{v}}_{k}^{j}|\bm{H}_{k}^j, {u_k})}}\\
			&= \sum_{j=1}^{l_k}\sum_{b=1}^{|\mathcal{V}|}{\log{f([\vec{\bm{v}}_{k}^{j}]_b|\bm{H}_{k}^j, {u_k})}},
		\end{aligned}
	\end{eqnarray}
where 
$f$ can be any sequential architecture like LSTM~\cite{li2017neural} or GRU~\cite{hidasi2015session}.
$\bm{H}_{k}^j = \{\vec{\bm{v}}_k^1,\vec{\bm{v}}_k^2,...,\vec{\bm{v}}_k^{j-1}\}$ is the history information.
$[\vec{\bm{v}}_{k}^{j}]_b\in \{0,1\}$ is the $b$th element in $\vec{\bm{v}}_{k}^{j}$, indicating whether item $b$ is interacted at step $j$.
Straightforwardly, one can directly optimize $f(\vec{\bm{v}}_{k}^{j}|\bm{H}_{k}^j, {u_k})$ by deploying a {softmax} output layer on $f$.
However, this method can be less effective due to the large number of items.
{In practice, people usually leverage the sigmoid function~\cite{li2017neural} to predict the interaction probability of each item separately, where one can adopt negative sampling~\cite{li2017neural,liu2018stamp} to speed up the training process.}
At last, the parameters of $f$ are optimized by maximizing the total log likelihood of all the training samples, that is, 
\begin{eqnarray}
\begin{aligned}
\sum_{k=1}^N{\log{p(\vec{\bm{v}}_k^1,\vec{\bm{v}}_k^2,...,\vec{\bm{v}}_k^{l_k}|u_k)}},
\end{aligned}
\end{eqnarray}
where N is the total number of user behavior sequences.

\setlength{\textfloatsep}{0.1cm}
\begin{table}[t]
	\centering
	\caption{{{Notations used in this paper}}}
	\vspace{-0.cm}
	\renewcommand\arraystretch{1.3}
	\scalebox{1.}{
		\begin{threeparttable} 
			\begin{tabular}{p{3.cm}<{\centering}|p{4.8cm}<{\centering}}
				\hline\hline
				Notation  & Description \\ \hline
				$\mathcal{U}$  &The user set in our problem.\\ 
				$|\mathcal{U}|$  &The number of users.\\ 
				$u_k$  &A user in $\mathcal{U}$. \\ 
				$\mathcal{V}$  &The item set in our problem.\\ 
				$|\mathcal{V}|$  &The number of items.\\ 
				$\vec{\bm{v}}_k^{j}$ &The $j$th item set in the $k$th interaction sequence, which is represented by a $|\mathcal{V}|$-dimensional multi-hot vector \\ 
				$[\vec{\bm{v}}_{k}^{j}]_b$  &the $b$th element in $\vec{\bm{v}}_{k}^{j}$ \\ 
				$\bm{H}_{k}^j $  & The history interactions before time step j.\\ 
				$\mathcal{O}$  &The interaction set between the users and items in our problem.\\ 
				$N$  &The number of interaction sequences. \\ 
				$l_k$  &the length of the \emph{k}-th interaction sequence.\\ 
				$\bm{W}\in\mathbb{R}^{|\mathcal{V}|\times |\mathcal{V}|}$& The item-level causal relation matrix.\\ 
	            $\bm{L}_b^j(\bm{W})$&The filtered history information for predicting item $b$. \\ 
	            $\bm{W}_{.b}$&The $b$th column of $\bm{W}$.\\ 
				$K$  & The number of item clusters \\ 
				$\tilde{\bm{v}}\in \mathbb{R}^{d}$  & The raw features of item $v$, for example, the textual description of the item. \\ 
				$\overline{\bm{v}}\in \mathbb{R}^{K}$  & The cluster assignment vector of item $v$.\\ 
				$m_k \in \mathbb{R}^{d_2}$  & The $k$th cluster center vector.\\ 
                $\bm{V}_1, \bm{V}_2, \bm{b}_1, \bm{b}_2$& The parameters in the encoder function.\\ 
				$\bm{W}^c\in \mathbb{R}^{K\times K}$  &The cluster-level causal relation matrix.\\ 
                $\bm{\alpha}$&The free parameters for learning $\overline{\bm{v}}$.\\ 
                $\eta$&The {softmax} temperature parameter.\\ 
                $\bm{V}_3, \bm{V}_4, \bm{b}_3, \bm{b}_4$& The parameters in the decoder function.\\ 
                $\bm{V}\in \mathbb{R}^{d_e\times d_h}$&The weighting matrix for adapting the embedding and hidden state spaces.\\ 
                $\hat{{W}}_{\vec{\bm{v}}_k^{t}b}$ &The total causal effect from the items in $\vec{\bm{v}}_k^{t}$ to item $b$.\\ 
                $\alpha_t$&The attention weight at step $t$.\\ 
                $\bm{A}$&The projection parameter in the attention network.\\ 
                $\beta_1, \beta_2$ &The Lagrange multiplier, and the penalty parameter.\\ 
                $\Theta_{g}$ &The parameters in the sequential model $g$.\\ 
                $\Theta_{e}$ &The set of user/item embedding parameters.\\ 
                $\Theta_{a}$ &The parameters in~(\ref{deepclu}) and~(\ref{dec}).\\ \hline
				\hline
			\end{tabular}
	\end{threeparttable} }
	\label{rec-notation}
	\vspace{-0.cm}
\end{table}

\subsection{Causal Discovery}
Causal relation refers to the concept that one event will result in the occurrence of the other ones.
For example, buying a printer may lead to the interaction with the ink box.
Causal discovery aims to infer causal relations from the observation data.
Formally, suppose we have a set of random variables $\bm{X} = \{X^1,X^2,...,X^m\}$.
Their causal relations are determined by a causal graph $G$, where the adjacency matrix $\bm{W}\in\mathbb{R}^{m\times m}$ is defined in the following way:
if $X^i$ is the cause of $X^j$, then $W_{ij} = 1$, otherwise, $W_{ij} = 0$.
{When $W_{ij} = 1$, it means that the change of $X^i$ will cause the change of $X^j$.}
{In the recommendation domain, if item A can cause item B, then observing A will lead to the observation of item B, not observing A will result in the absent of item B.
In many cases, we may observe interchangeable orders between two items. 
However, we cannot easily say that these items are causally relevant with each other.
Determining the their causal relations may depend on whether the absent of one item can lead to the disappearance of the other one.
}

Given an observed dataset $\mathcal{T}=\{(x_i^1,x_i^2,...,x_i^m)\}_{i=1}^n$, where each sample $(x_i^1,x_i^2,...,x_i^m)$ is an implementation of $\bm{X}$.
Causal discovery aims to learn $W$ based on $\mathcal{T}$.
Unlike ordinary structure learning, causal discovery requires $\bm{W}$ to be directed acyclic, which encodes the fact that the cause and effect are not commutable, e.g., if $X^i$ causes $X^j$, then $X^j$ cannot be the reason of $X^i$.
{
It should be noted that the acyclic requirement is the basic nature of causality.
In the recommendation domain, while one may observe interchangeable orders between two items, it does not mean the causal relation can be cyclic.
They may be both causally triggered by the other factors.
For the classical ``beer and diaper'' example, the two items are frequently observed together and their orders can be interchangeable. However, there is no causal relation between them, and it is unreasonable to explain them by each other.
}

In order to solve the causal discovery problem, people have designed quite a lot of methods~\cite{ng2019graph,zheng2018dags,brouillard2020differentiable}, among which NOTEARS~\cite{zheng2018dags} is a very popular one due to its differentiable nature.
Basically, NOTEARS aims to solve the following optimization problem
	\begin{eqnarray}\label{op}
		\begin{aligned}
			&\min_{\bm{W}}\sum_{i=1}^n\sum_{j=1}^m(x_i^j-\bm{x}_i^T\bm{W}_{\cdot j})^2 + \lambda||\bm{W}||_1,\\
			&s.t.~~~~~~~\text{trace}(e^{\bm{W}\odot \bm{W}}) = m,
		\end{aligned}
	\end{eqnarray}
where $W_{\cdot j}\in\mathbb{R}^{m}$ is the $j$th column of $\bm{W}$, $\bm{x}_i=(x_i^1,x_i^2,...,x_i^m)\in\mathbb{R}^{m}$ is the $i$th sample.
Multiplying $\bm{x}_i$ with $W_{\cdot j}$ actually means regressing $x_i^j$ by all its cause variables, since the irrelevant variables are filtered out by $W_{ij} = 0$.
$\odot$ is element-wise multiplication. The constraint $\text{trace}(e^{\bm{W}\odot \bm{W}}) = m$ aims to ensure that $\bm{W}$ is a directed acyclic graph (DAG).
Let $S=\bm{W}\odot \bm{W}$, then according to taylor expansion, $\text{trace}(e^{S}) = \text{trace}(\bm{I})+\text{trace}(S)+\text{trace}(S^2)+...$,
	% \begin{eqnarray}
	% 	\begin{aligned}
	% 		\text{trace}(e^{S}) = \text{trace}(\bm{I})+\text{trace}(S)+\text{trace}(S^2)+...
	% 		\end{aligned}
	% 	\end{eqnarray}
recall that $[S^k]_{ij}$ is the number of k-step paths from $X^i$ to $X^j$, then $\text{trace}(e^{S})=m \Rightarrow \text{trace}(S)+\text{trace}(S^2)+...=0$,
% \begin{eqnarray}
% \begin{aligned}
% \text{trace}(e^{S})=m \Rightarrow \text{trace}(S)+\text{trace}(S^2)+...=0,
% \end{aligned}
% \end{eqnarray}
which means there is no path from $X^i$ to $X^i$ with any-steps.

\noindent
\textbf{Markov equivalent class (MEC).}
Ideally, the learned causal graph should be exactly aligned with the true causal relations among different variables.
However, it is impossible to directly derive the true causal graph without enough assumptions~\cite{brouillard2020differentiable}.
In practice, Markov equivalent class (MEC)~\cite{brouillard2020differentiable} is usually leveraged to verify whether the learned causal graph is satisfied, which is defined as follows:
\begin{definition}
	Two DAGs $G_1$ and $G_2$ are said to be in the same Markov equivalent class if they share the same skeleton and v-structures, that is:
	(1) {for any two nodes $i$ and $j$, if there is an edge between $i$ and $j$ in $G_1$, then $i$ and $j$ are also directly connected in $G_2$.}
	(2) {for any three nodes $i$, $j$ and $k$, if their relations in $G_1$ are $i\rightarrow k\leftarrow j$\footnote{Here, $\rightarrow$ and $\leftarrow$ are direct edges.}, then the same relations are also valid in $G_2$, and vice versa.}
\end{definition}
If the causal graph learned from an algorithm $\mathcal{A}$ falls into the same Markov equivalent class with the true causal graph\footnote{Actually, after obtaining the causal graph in the true MEC, little efforts are needed to determine the true causal graph~\cite{brouillard2020differentiable}.}, then we say the causal graph is \textit{identifiable} by $\mathcal{A}$.
For more technique details about causal discovery, we refer the readers to~\cite{brouillard2020differentiable} for more comprehensive introduction.
In this paper, we combine causal discovery with sequential recommendation to capture causalities among user behaviors, which, to the best of our knowledge, is the first time in the recommendation domain.
For clear presentation, we list the notations leveraged throughout this paper in Table ~\ref{rec-notation}

\section{The Causer Model}\label{cau-model}
In this section, we describe our framework more in detail.
Formally, suppose the underlying causal relation between different items is defined by $\bm{W}\in\mathbb{R}^{|\mathcal{V}|\times |\mathcal{V}|}$, where if item $i$ is the cause of item $j$, then $W_{ij}=1$, otherwise, $W_{ij}=0$.
We denote by $\bm{W}_{.b}$ the $b$th column of $\bm{W}$, indicating all the causes of item $b$.
Given a user behavior sequence $(u,\vec{\bm{v}}^1,\vec{\bm{v}}^2,...,\vec{\bm{v}}^{l})$, we firstly compute its log likelihood based on $\bm{W}$, which improves equation~(\ref{or-loss}) to the following objective: 
	\begin{eqnarray}\label{imp-loss}
		\begin{aligned}
			&\log{p(\vec{\bm{v}}^1,\vec{\bm{v}}^2,...,\vec{\bm{v}}^{l}|u)} =\\ &\sum_{j=1}^{l}\sum_{b=1}^{|\mathcal{V}|}{\log{f([\vec{\bm{v}}^{j}]_b|\bm{L}_b^j(\bm{W}), {u})}},
		\end{aligned}
	\end{eqnarray}
where $\bm{L}_b^j(\bm{W}) = \{\vec{\bm{v}}^1\odot\bm{W}_{.b},\vec{\bm{v}}^2\odot\bm{W}_{.b},...,\vec{\bm{v}}^{j-1}\odot\bm{W}_{.b}\}$.
% \begin{eqnarray}
% \begin{aligned}
% 	\bm{L}_b^j(\bm{W}) = \{\vec{\bm{v}}^1\odot\bm{W}_{.b},\vec{\bm{v}}^2\odot\bm{W}_{.b},...,\vec{\bm{v}}^{j-1}\odot\bm{W}_{.b}\}.
% \nonumber
% \end{aligned}
% \end{eqnarray}
At each step, the operation $\vec{\bm{v}}^t\odot\bm{W}_{.b}~(t\in[1,j-1])$ aims to remove all the causally irrelevant items, which facilitates more focused history representation.
As mentioned before, causalities can reveal robust and intrinsic item correlations, the purified training instance $\{(\bm{L}^j_b(W), {u}),[\vec{\bm{v}}^{j}]_b\}$ makes it easier to learn the basic user behavior patterns.
Given the training dataset $\mathcal{O}$, the final loss function is:
	\begin{eqnarray}\label{ini-loss}
		\begin{aligned}
			\min &-\sum_{k=1}^N \sum_{j=1}^{l}\sum_{b=1}^{|\mathcal{V}|}{ \log{f([\vec{\bm{v}}^{j}]_b|\bm{L}_b^j(\bm{W}), {u})}}+ \lambda||\bm{W}||_{1}\\
			s.t.&\quad \text{trace}(e^{\bm{W}\odot \bm{W}}) = |\mathcal{V}|
		\end{aligned}
	\end{eqnarray}
where $||\bm{W}||_{1}$ aims to encourage sparse causal relations among items.
$\lambda > 0$ is the regularizer coefficient. 
With a large $\lambda$, the causal graph is regularized to be sparse.
While when $\lambda$ is smaller, the causal graph is allowed to be denser.
The constraint $\text{trace}(e^{\bm{W}\odot \bm{W}}) = |\mathcal{V}|$ ensures that the causal graph is acyclic.

The key to the above idea lies in how to derive $\bm{W}$.
The most straightforward method is manually defining it based on human experiences.
For example, people always buy a pot cleaner after purchasing the coffee pot, and interacting with the ink box is usually because the user has purchased a printer.
While this method is easy to execute, it suffers from several significant weaknesses:
to begin with, handcrafting the causal relations for each item pair is too labor intensive, where one needs to label $\mathcal{O}(|\mathcal{V}|^2)$ item pairs!
In addition, the manually defined $\bm{W}$ is hard to generalize, that is, one cannot infer causal relations among newly appeared items, which makes it difficult to be applied in practice.

In order to overcome these weaknesses, and inspired by the recent advances~\cite{zheng2018dags,brouillard2020differentiable,ng2019graph} in causal discovery, we propose to learn $\bm{W}$ adaptively from the data.
However, directly applying causal discovery algorithms like NOTEARS to sequential recommender models is not easy because:
(1) the number of items (i.e., $|\mathcal{V}|$) can be very large, which makes it hard to efficiently store and optimize $\bm{W} (\in \mathbb{R}^{|\mathcal{V}|\times |\mathcal{V}|})$.
(2) Basically, different item correlations may follow some common high-level patterns. For example, the relations of ``coffee pot $\rightarrow$ pot cleaner'' and ``printer $\rightarrow$ ink box'' are both specifications of the pattern ``office/living items $\rightarrow$ accessories''. 
Directly fitting item-level correlations may result in too sensitive models, which fails to capture the above robust underlying patterns.

\vspace{-0.cm}
\subsection{Cluster-level Causal Graph}
For learning causalities in a more feasible manner, we assume that the item space is structured, which can be expanded by a small amount of latent clusters~\cite{lee2013local,haeffele2014structured}.
Instead of building item-level causal graph, we firstly represent each item as a mixture of the latent clusters, and then define the causal graph on the cluster-level, which may facilitate more scalable and robust optimization.

For efficient training, we cluster the items in a fully differentiable manner~\cite{yang2016joint,yang2017towards}.
More specifically, for each item $v$, we firstly project its raw features\footnote{{The raw features can be any information describing the item profiles, such as the item descriptions and so on.}} $\tilde{\bm{v}}\in \mathbb{R}^{d}$ into an embedding as follows:
	\begin{eqnarray}\label{encoder}
		\begin{aligned}
			{\bm{v}^*} = \bm{V}_2\sigma(\bm{\bm{V}_1\tilde{\bm{v}}+\bm{b}_1})+\bm{b}_2,
		\end{aligned}
	\end{eqnarray}
where
$\bm{V}_1\in \mathbb{R}^{d_1\times d}$, $\bm{b}_1\in \mathbb{R}^{d_1}$, $\bm{V}_2\in \mathbb{R}^{d_2\times d_1}$, $\bm{b}_2\in \mathbb{R}^{d_2}$ are weighting parameters. $\sigma(x) = \frac{1}{1+e^{-x}}$ is the sigmoid function.
Suppose we have K clusters and the center of the $k$th cluster is $\bm{m}_k\in \mathbb{R}^{d_2}$.
Let $\overline{\bm{v}}\in \mathbb{R}^{K}$ be a cluster assignment vector, with each element $\overline{{v}}_k$ indicating the probability that item $v$ is assigned to cluster $k$. 
Then, we represent each item with a mixture of the clusters by minimizing the following loss:
	\begin{eqnarray}\label{deepclu}
		\begin{aligned}
			&\min \!\sum_{v\in \mathcal{V}}\!||\bm{v}^*\!-\!\sum_{k=1}^K \overline{{v}}_k\bm{m}_k||^2_2, \\
			&~s.t. \sum_{k=1}^K \overline{{v}}_k\!=\!1,~\overline{{v}}_k\!>\!0~\forall k\!\in\![1,K]
		\end{aligned}
	\end{eqnarray}
where the assignment vector $\overline{\bm{v}}$ and cluster center $\bm{m}_k$ are jointly learned in the optimization process.
By this objective, the item embeddings are constrained to be different convex combinations on a unified set of cluster centers.

It should be noted that both $\bm{v}^*$ and $\overline{\bm{v}}$ are representations of item $v$, but the former aims to reveal semantics, while the latter indicates cluster assignments.
For efficiently optimizing this objective, we introduce a free parameter $\bm{a}\in \mathbb{R}^{K}$ to relax the constraints.
In specific, we let $\overline{{v}}_k = \frac{\exp{({a}_k/\eta}) }{\sum_{k=1}^{K} \exp{(a_k/\eta) }}$,
% \begin{eqnarray}
% 		\begin{aligned}
% 		\overline{{v}}_k = \frac{\exp{({a}_k/\eta}) }{\sum_{k=1}^{K} \exp{(a_k/\eta) }},
% \end{aligned}
% 	\end{eqnarray}
which makes the constraints $\sum_{k=1}^K \overline{{v}}_k\!=\!1,~\overline{{v}}_k\!>\!0~\forall k\!\in\![1,K]$ always hold for any $\bm{a}$.
The temperature $\eta$ is leveraged to tune the hardness of the assignment vector.
If $\eta\rightarrow 0$, the assignment vector is one-hot. As $\eta$ becomes larger, the cluster distribution becomes disperser. 

At last, we feed the item embedding $\bm{v}^*$ into a decoder, and expect to reconstruct the item raw features $\tilde{\bm{v}}$ from its output, that is:
	\begin{eqnarray}\label{dec}
		\begin{aligned}
		\min \sum_{v\in \mathcal{V}}||\hat{\bm{v}}-\tilde{\bm{v}}||^2_2,
		\end{aligned}
	\end{eqnarray}
where the decoder for reconstruction is specified as: $\hat{\bm{v}} = \bm{V}_4\sigma(\bm{\bm{V}_3\bm{v}^*+\bm{b}_3})+\bm{b}_4$,
% \begin{eqnarray}
% 	\begin{aligned}
% 			\hat{\bm{v}} = \bm{V}_4\sigma(\bm{\bm{V}_3\bm{v}^*+\bm{b}_3})+\bm{b}_4,
% 	\end{aligned}
% \end{eqnarray}
We denote by $\Theta_{a}$ all the parameters in~(\ref{deepclu}) and~(\ref{dec}), that is, $\Theta_{a} = \{\bm{V}_1,\bm{V}_2,\bm{V}_3,\bm{V}_4,\{\bm{m}_k\}_{k=1}^K,\overline{\bm{v}},\bm{b}_1,\bm{b}_2,\bm{b}_3,\bm{b}_4\}$.

Based on the above learning objectives, we can obtain a cluster assignment vector for each item.
We define by $\bm{W}^c=[\bm{W}^c_{ij}]\in \mathbb{R}^{K\times K}$ the causal relations among different clusters, where ${W}^c_{ij}=1$ means cluster $i$ is the cause of cluster $j$.
For example, office/living items and accessories are two clusters, and people usually purchase accessories after buying office/living items.
Thus $W^c_{ij}=1$ for the relation from office/living items to accessories.
If the clusters $i$ and $j$ are causally irrelevant, then $W^c_{ij}=W^c_{ji}=0$.
Based on $\bm{W}^c$, the causal relation between two items $a$ and $b$ can be computed as:
{\setlength\abovedisplayskip{2pt}
	\setlength\belowdisplayskip{2pt}
	\begin{eqnarray}\label{st}
		\begin{aligned}
			W_{ab} = \overline{\bm{a}}^T\bm{W}^c\overline{\bm{b}} = \sum_{i=1}^K\sum_{j=1}^K \overline{{a}}_i W^c_{ij} \overline{{b}}_j,
		\end{aligned}
	\end{eqnarray}
}
where $\overline{\bm{a}}$ and $\overline{\bm{b}}$ are cluster assignment vectors of item $a$ and $b$ in equation~(\ref{deepclu}).
In order to compute the item-level causal relation, this equation iterates all the cluster pairs, and for each pair, the cluster-level causal relation is multiplied by the probabilities that the items are assigned to the clusters.
Extremely, if the items are clustered in a hard manner (\emph{i.e.}, $\eta\rightarrow 0$), then the item-level causal relation is exactly the corresponding cluster-level causal relation\footnote{In practice, one can disambiguate item causal relations via controlling $\eta$.}. 
An intuitive example of computing $W_{ab}$ can be seen in Figure~\ref{model}(a)

\begin{figure}[t]
	\centering
	\setlength{\fboxrule}{0.pt}
	\setlength{\fboxsep}{0.pt}
	\fbox{
		\includegraphics[width=.9\linewidth]{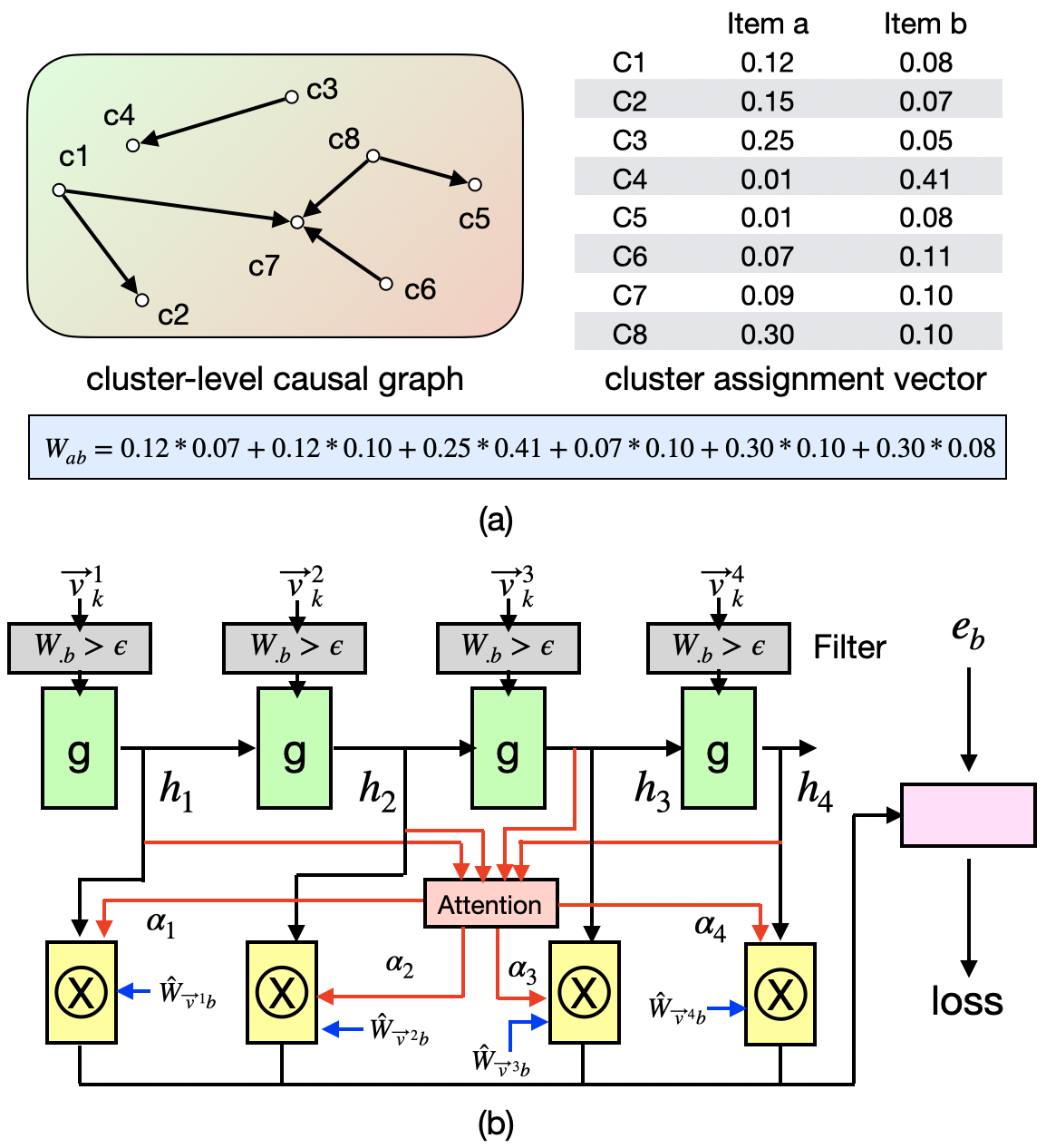}
	}
	\vspace*{-0.2cm}
	\caption{
		(a) The left is an exampled cluster-level causal graph, and the right are two cluster assignment vectors.
		The item-level causal relation is computed at the bottom. 
		(b) Our model architecture, where the red and blue lines represent the local and global item relations, respectively.
	}
	\label{model}
	\vspace*{-0.cm}
\end{figure}

\subsection{Model Implementation}
Given the history information $(u_k,\vec{\bm{v}}_k^1,\vec{\bm{v}}_k^2,...,\vec{\bm{v}}_k^{j-1})$, the probability of interacting with item b at step $j$ is:
	\begin{eqnarray}\label{cx}
		\begin{aligned}
			&\bm{h}_{t+1} = g (\bm{h}_{t}, \vec{\bm{v}}_k^{t}\odot\bm{1}(\bm{W}_{.b}>\epsilon),u_k),~~t\in[1,j-1]\\
			&f([\vec{\bm{v}}_k^{j}]_b=1|\bm{H}_{k}^j, {u}_k)  = \sigma{(\bm{e}_b^T (\bm{V}\sum_{t=1}^{j-1} \hat{{W}}_{\vec{\bm{v}}_k^{t}b}\alpha_t \bm{h}_t))},
		\end{aligned}
	\end{eqnarray}
where $\bm{H}_{k}^j = \{\vec{\bm{v}}_k^1,\vec{\bm{v}}_k^2,...,\vec{\bm{v}}_k^{j-1}\}$. 
$\bm{W}$ is derived based on equation~(\ref{st}).
$\bm{1}(\bm{W}_{.b}>\epsilon)$ is a vector-level indication function, which basically binarizes $\bm{W}_{.b}$ by a threshold parameter $\epsilon$.
$\vec{\bm{v}}_k^{t}\odot\bm{1}(\bm{W}_{.b}>\epsilon)$ removes the items which are causally less relevant.
If there is no cause item at some step, that is, $\vec{\bm{v}}_k^{t}\odot\bm{1}(\bm{W}_{.b}>\epsilon)$ is an all-zero vector, then we directly skip this step.
Our framework can be implemented with either LSTM~\cite{hochreiter1997long} or GRU~\cite{chung2014empirical}, where input item embeddings are derived based on equation~(\ref{encoder}), and we use $g$ to summarize the computational rules at each step.
$\bm{V}\in \mathbb{R}^{d_e\times d_h}$ is a weighting matrix adapting the embedding and hidden state spaces.
$\bm{e}_b\in \mathbb{R}^{d_e}$ is an independent embedding of item $b$ for computing the similarity between the candidate item and the history information.
$\hat{{W}}_{\vec{\bm{v}}_k^{t}b}$ is the total causal effect from the items in $\vec{\bm{v}}_k^{t}$ to item $b$, and we compute it as 
$(\vec{\bm{v}}_k^{t})^T(\bm{W}_{.b}\odot\bm{1}(\bm{W}_{.b}>\epsilon))$, where $\bm{W}_{.b}\odot\bm{1}(\bm{W}_{.b}>\epsilon)$ basically zeroizes the elements in $\bm{W}_{.b}$ smaller than $\epsilon$ (\emph{i.e.}, removing causally irrelevant information).
% {In many domains, we may observe interchangeable orders between two items. 
% We argue that the relation between these items can be association instead of causality.
% Such relation is expected to be capture by the neural network, which is powerful in modeling associations, but not $\hat{{W}}_{\vec{\bm{v}}_k^{t}b}$.
% }
$\alpha_t$ is the attention weight computed by $\frac{e^{\text{sim}(\bm{h}_{t},\bm{h}_{j-1})}}{\sum_{k=1}^{j-1} e^{\text{sim}(\bm{h}_{k},\bm{h}_{j-1})}}$,
	% \begin{eqnarray}\label{att}
	% 	\begin{aligned}
	% 		\alpha_t = \frac{e^{\text{sim}(\bm{h}_{t},\bm{h}_{j-1})}}{\sum_{k=1}^{j-1} e^{\text{sim}(\bm{h}_{k},\bm{h}_{j-1})}},
	% 		\end{aligned}
	% 	\end{eqnarray}
where $\text{sim}(\bm{h}_{t},\bm{h}_{j-1}) = \bm{h}_{t}^TA\bm{h}_{j-1}$ aims to compute the importance of $\vec{\bm{v}}_k^{t}$ in the history information summarized by $\bm{h}_{j-1}$, $\bm{A}\in \mathbb{R}^{d_h\times d_h}$ is a projection parameter.
Note that the attention is applied to the filtered history information, it aims to discriminate the importances of the items which are already the cause of the target item.

\begin{remark}
(1) In the above model, $\hat{{W}}_{\vec{\bm{v}}_k^{t}b}$ defines the item causal relation in a global manner, where the local context is not considered.
$\alpha_t$ compensates such information by normalizing the importances of the causally relevant items in the same sequence.
By multiplying $\hat{{W}}_{\vec{\bm{v}}_k^{t}b}$ with $\alpha_t$, we aim to capture more comprehensive relations between the history information and target item.
(2) {As an extreme case, where we do not have any training data, our framework recommends items following a nearly uniform distribution.
Empirically, we found that suppose there are N items in the system, then each item is recommended with the probability of $\frac{1}{N}$.}
(3) We illustrate the complete architecture of our model in Figure~\ref{model}(b).
\end{remark}

\subsection{Model Optimization}
For the training set $\mathcal{O}=\{(u_k,v_k^1,v_k^2,...,v_k^{l_k})\}_{k=1}^N$, the parameters of our model are learned by the following optimization problem:
	\begin{eqnarray}\label{ourst}
		\begin{aligned}
			\min &-\sum_{k=1}^N\sum_{j=1}^{l_k} \sum_{b=1}^{|\mathcal{V}|}\{ [\vec{\bm{v}}_k^{j}]_b\log f([\vec{\bm{v}}_k^{j}]_b=1|\bm{H}_{k}^j, {u}_k) \\
			&+(1-[\vec{\bm{v}}_k^{j}]_b)\log(1-f([\vec{\bm{v}}_k^{j}]_b=1)|\bm{H}_{k}^j, {u}_k) \}\\
			&+ \lambda||\bm{W}^c||_{1}+\sum_{v\in \mathcal{V}}||\hat{\bm{v}}-\tilde{\bm{v}}||^2_2\\
			&+ \sum_{v\in \mathcal{V}}||\bm{v}^*-\sum_{k=1}^K \overline{{v}}_k\bm{m}_k||^2_2\\
			s.t.&\quad \text{trace}(e^{\bm{W}^c\odot \bm{W}^c}) = K
		\end{aligned}
	\end{eqnarray}
where we rewrite objective~(\ref{or-loss}) based on equation~(\ref{cx}) by a binary cross-entropy loss.
% Since the number of items can be very large, negative sampling~\cite{li2017neural,liu2018stamp} is usually leveraged to facilitate the training process.
This optimization problem advances previous sequential recommender models by causally purifying the history information and incorporating causalities when merging the hidden states. 
These abilities are empowered by introducing the cluster-level causal graph $\bm{W}^c$.
Similar to the previous work~\cite{zheng2018dags}, we use the augmented Lagrangian method to solve problem~(\ref{ourst}), which leads to the following loss:
	\begin{eqnarray}
		\begin{aligned}
			&L_1(\Theta_{g}, \Theta_{e}, \Theta_{a}, \bm{V}, \bm{W}^c, \beta_1, \beta_2) = \\
			-&\sum_{k=1}^N\sum_{j=1}^{l_k} \sum_{b=1}^{|\mathcal{V}|}\{ [\vec{\bm{v}}_k^{j}]_b\log f([\vec{\bm{v}}_k^{j}]_b=1|\bm{H}_{k}^j, {u}_k)+ \\
			&(1-[\vec{\bm{v}}_k^{j}]_b)\log(1-f([\vec{\bm{v}}_k^{j}]_b=1)|\bm{H}_{k}^j, {u}_k) \}+\lambda||\bm{W}^c||_{1}\\
			&+\sum_{v\in \mathcal{V}}||\hat{\bm{v}}-\tilde{\bm{v}}||^2_2+\sum_{v\in \mathcal{V}}||\bm{v}^*-\sum_{k=1}^K \overline{{v}}_k\bm{m}_k||^2_2+\beta_1 b(\bm{W}^c) \\
			&+ \frac{\beta_2}{2}|b(\bm{W}^c)|^2\\ \notag
		\end{aligned}
	\end{eqnarray}
where $b(\bm{W}^c) = \text{trace}(e^{\bm{W}^c\odot \bm{W}^c}) - K$.
% \begin{eqnarray}
% \begin{aligned}
% 	b(\bm{W}^c) = \text{trace}(e^{\bm{W}^c\odot \bm{W}^c}) - K.
% 	\nonumber
% \end{aligned}
% \end{eqnarray}
$\Theta_{g}$ collects all the parameters in the sequential model $g$.
$\Theta_{e}$ is the set of user/item embedding parameters.
$\Theta_{a}$ is the parameter set in~(\ref{deepclu}) and~(\ref{dec}).
$\beta_1$ is the Lagrange multiplier, and $\beta_2>0$ is a penalty parameter.

We summarize the complete optimization process of our model in Algorithm~\ref{random-alg}.
In each training epoch, the item causal relations $\bm{W}$ are firstly derived based on $\bm{W}^c$ and equation~(\ref{st}) (\textit{line 7}).
Then $\bm{W}$ is leveraged to filter the history information (\textit{line 8}). 
At last, the model parameters are alternatively updated until convergence (\textit{line 11-15}).
In the testing phase, the history information is firstly filtered by $\bm{W}$, and then fed into $f$ to predict the target item.

\begin{algorithm}[t] 
	\caption{Learning Algorithm of Causer} 
	\label{random-alg} 
	Initialize the parameters $\Theta_{g}, \Theta_{e}, \Theta_{a}, \bm{V}, \bm{W}^c, \beta_1, \beta_2$.\\
	Indicate the epoch number $N_e$.\\
	Indicate the iteration number $N_i$.\\
	Indicate the threshold parameter $\epsilon$.\\
	Indicate hyper-parameters $\kappa_1>1$ and $\kappa_2<1$.\\
	\For{i in [0, $N_e$]}{
		Compute $\bm{W}$ based on $\bm{W}^c$ and equation~(\ref{st}).\\
		Leverage $\bm{W}$ and $\epsilon$ to filter the history information.\\
		\For{i in [0, $N_i$]}{
			$\bm{W}^{c-} \leftarrow \bm{W}^c$\\
			\For{$\Omega$ in $\{\Theta_{g}, \Theta_{e}, \Theta_{a}, \bm{V}, \bm{W}^c\}$}{
				$\Omega \leftarrow \Omega - \gamma \frac{\partial L_1(\Theta_{g}, \Theta_{e}, \Theta_{a}, \Theta_{c}, \bm{V}, \bm{W}^c, \beta_1, \beta_2)}{\partial \Omega}$\\
			}
			$\beta_1 \leftarrow \beta_1 + \beta_2 b(\bm{W}^c)$\\
			$\beta_2 \leftarrow \kappa_1 \beta_2 \quad if~~|b(\bm{W}^c)|\geq \kappa_2|b(\bm{W}^{c-})|$\\
		}
	}
\end{algorithm}

{
\textbf{Model Efficiency}.
Comparing with the previous sequential recommender models, the most significant part of our framework is introducing the causal matrix $\bm{W}$.
While it can help to learn item causal relations, the model efficiency may also be sacrificed.
In practice, a potential solution for improving the efficiency can be lowering the updating frequency of the parameters irrelevant with the recommender model, that is, we can separate the parameters in line 11 of Algorithm~\ref{random-alg}, and update the $\Theta_{a}$ and $\bm{W}^c$ with a slower pace to reduce the additional cost.
Empirically, we found that if $\Theta_{a}$ and $\bm{W}^c$ are updated every ten epochs, then the training efficiency can be improved by about 22\%. 
In addition, if we can access the prior knowledge on $\bm{{W}}$, then we may pre-train $\bm{{W}}$, and fix it in equation~(\ref{cx}) to improve the training efficiency.
It should be noted that, the inference efficiency of our framework may not be impacted too much, since all the parameters are fixed in this process.
For example, suppose the inference time of SASRec is T, then our framework cost about 1.16T in the testing phase. 
}

\subsection{Identifiability Analysis}
In this section, we analyze the identifiability of the causal graph learned by our framework.
We have the following theory:
\begin{mytheory}
	Let $G^*$ and ${G}$ be the ground truth item-level causal graph and the one learned based on objective~(\ref{ini-loss}).
	Suppose: 
	(\romannumeral1) {the hypotheses class $\bm{F}$ of our model $f$ is large enough to recover the ground truth causal graph.}
	(\romannumeral2) For item sets $A$, $B$ and $C$, if $A$ and $B$ cannot be d-separated by $C$ in the causal graph, then for any $s<t$, $p([\vec{\bm{v}}^t]_A,[\vec{\bm{v}}^t]_B|[\vec{\bm{v}}^{s:1}]_C)\neq p([\vec{\bm{v}}^t]_A|[\vec{\bm{v}}^{s:1}]_C)p([\vec{\bm{v}}^t]_B|[\vec{\bm{v}}^{s:1}]_C)$,
	where $[\cdot]_{O} = \{[\cdot]_i|i\in O\},~~\forall O\in \{A,B,C\}$, and $[\vec{\bm{v}}^{t-1:1}]_K = \{[\vec{\bm{v}}^{t-1}]_K,...,[\vec{\bm{v}}^1]_K\}$.
	(\romannumeral3) {$f([\vec{\bm{v}}_{k}^{j}]_b|\bm{L}_b^j(\bm{W}), {u_k})$ is strictly positive, and the corresponding entropy is finite. }
	Then for small enough $\lambda$, ${G}$ is Markov equivalent to $G^*$.
\end{mytheory}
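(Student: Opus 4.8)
The plan is to adapt the score-based identifiability argument of NOTEARS~\cite{zheng2018dags} and its refinement by Brouillard et al.~\cite{brouillard2020differentiable} to the temporal likelihood~(\ref{imp-loss}). Write $\ell(G)$ for the population negative log-likelihood obtained from~(\ref{imp-loss}) after minimizing over $f\in\bm{F}$ with the item graph fixed to $G$ (equivalently, the support of $\bm{W}$); objective~(\ref{ini-loss}) is then $\ell(G)+\lambda\|\bm{W}\|_1$ under the acyclicity constraint. By a law-of-large-numbers / uniform-convergence argument over the finitely many graph supports, the empirical minimizer approaches the population minimizer, so it suffices to analyse the latter; I identify a feasible $\bm{W}$ with the DAG given by its $\epsilon$-thresholded support (the thresholding already used in~(\ref{cx})).

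Step 1 is to characterise the likelihood-optimal structures. The filtering $\vec{\bm{v}}^t\odot\bm{W}_{.b}$ makes $f(\cdot\mid\bm{L}_b^j(\bm{W}),u)$ depend on the history only through the parents of item $b$ in $G$, so $\ell(G)$ decomposes over $(j,b)$, and by the expressivity assumption~(\romannumeral1) together with strict positivity and finite entropy~(\romannumeral3), $\min_{f\in\bm{F}}\ell(G)$ equals the conditional entropy of the data-generating process when $G$ is an I-map of that process, and is strictly larger otherwise. This is Gibbs' inequality for conditional KL divergence, strict exactly when the chosen parent sets fail to be sufficient, with~(\romannumeral3) keeping every term finite; since the data come from $G^*$, that graph is an I-map and attains the minimum, and because there are only finitely many supports the best non-I-map value is strictly worse by a fixed gap $\delta>0$.

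Step 2 is to use the penalty and faithfulness. Taking $\lambda$ below $\delta$ divided by a crude bound on $\|\bm{W}\|_1$ over the feasible region forces the minimizer $\hat G$ to be an I-map, since otherwise $G^*$ strictly improves the objective; among I-maps the penalty favours sparsity, and — modulo the $\ell_1$-versus-edge-count gap handled by the $\epsilon$-thresholding in~(\ref{cx}) — $\hat G$ is a \emph{sparsest} I-map. It remains to show every sparsest I-map is Markov equivalent to $G^*$, which is where assumption~(\romannumeral2) enters. No I-map can drop an edge of $G^*$: non-adjacency of $i,j$ in $\hat G$ would d-separate them there by some set, hence make them conditionally independent in $P$, contradicting~(\romannumeral2) for the adjacent pair $i,j$; so a sparsest I-map has exactly the skeleton of $G^*$. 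At each unshielded triple the collider status must also agree: a collider in $\hat G$ that is a non-collider in $G^*$ yields a d-separation in $\hat G$ (without conditioning on the middle node) along a path that is active in $G^*$, so~(\romannumeral2) forces dependence in $P$ and the I-map property fails; the symmetric argument, now conditioning on the middle node, rules out a non-collider in $\hat G$ that is a collider in $G^*$. Hence $\hat G$ and $G^*$ share skeleton and v-structures, i.e., they are Markov equivalent, which is the claim.

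The main obstacle I anticipate is Step 2: establishing that a \emph{sparsest} I-map coincides with the Markov equivalence class of $G^*$ in the \emph{sequential} regime, where the relevant (in)dependencies are between items at a step $t$ conditioned on the unrolled history $[\vec{\bm{v}}^{s:1}]$ — exactly the form of~(\romannumeral2) — and must be matched carefully against ordinary d-separation in the item graph. Two further places need routine but non-trivial care: converting the $\ell_1$ penalty into an effective edge count (handled via the thresholding in~(\ref{cx}), or a minimum-edge-weight clause absorbed into~(\romannumeral1), so that fewer edges strictly lowers the penalty for small $\lambda$), and the finite-sample-to-population transfer, which only needs uniform convergence of the empirical log-likelihood over $\bm{F}$ together with finiteness of the DAG search space.
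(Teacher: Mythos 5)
Your plan is essentially the paper's own proof: both reduce the problem to a population score comparison $S(G^*)-S(G)$, use assumption (i) to rewrite the likelihood gap as $\inf_{\Theta}\sum_{t}KL(p_t\,||\,f_G)$ plus the $\ell_1$ penalty difference, invoke faithfulness (assumption (ii)) via the lemma that $p_t\notin M(G)$ forces $\inf KL>0$, and impose the same small-$\lambda$ condition $\lambda<\inf_{\Theta}\sum_t KL(p_t||f_G)/(||\bm{W}_{G^*}||_1-||\bm{W}_G||_1)$ to keep the penalty from overturning the gap. The only difference is organizational: the paper runs a direct four-case analysis on how $G$ can differ from $G^*$ (edge of $G^*$ missing from $G$, extra edge in $G$, and the two v-structure mismatches under a shared skeleton), whereas you package the same combinatorics as ``the minimizer is a sparsest I-map, and sparsest I-maps form the Markov equivalence class of $G^*$'' --- the underlying lemmas are identical and both follow Brouillard et al.
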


\begin{proof}
	For easy derivation, we write the negative expectation of loss~(\ref{ini-loss}) as: 
		\begin{eqnarray}\label{noteee}
			\begin{aligned}
				S(G)\!=\!\sup_{\Theta} \sum_{t=1}^{T}\text{E}_{\bm{v}^t \sim p_t(\bm{v}^t)} [\log f_G(\vec{\bm{v}}^t;\Theta)] - \lambda||\bm{W}_G||_{1}\nonumber,
				\end{aligned}
			\end{eqnarray}
	where $$f_G(\vec{\bm{v}}^t;\Theta) = f(\vec{\bm{v}}^t|\bm{L}_b^j(\bm{W}_G);\Theta) = \prod_{b=1}^{|\mathcal{V}|}f([\vec{\bm{v}}^t]_b|\bm{L}_b^j(\bm{W}_G);\Theta),$$
	$\bm{W}_G\in \{0,1\}^{|\mathcal{V}|\times |\mathcal{V}|}$ is the adjacency matrix for a given causal graph $G$. 
	% For easy derivation, we here set the history length as 1, but the results can be easily extended to longer sequences.
	$p_t$ is the data generation probability induced from the ideal causal graph $G^*$ at step $t$. $\Theta$ is the set of model parameters.
	If we can {prove} that the optimal causal graph $G^*$ can lead to larger $S(G^*)$ than the other ones which are not Markov equivalent to $G^*$, then by {maximizing $S(G)$}, we can obtain satisfied causal graphs which are Markov equivalent to $G^*$.
	To begin with, we have:
	{\setlength\abovedisplayskip{3pt}
		\setlength\belowdisplayskip{3pt}
		\begin{eqnarray}\label{ourf}
			\begin{aligned}
				&S(G^*) - S(G)\\
				&=\sup_{\Theta} \sum_{t=1}^T \text{E}_{\vec{\bm{v}}^t \sim p_t}[\log f_{G^*}(\vec{\bm{v}}^t;\Theta)]\\
				&-\! \sup_{\Theta} \!\sum_{t=1}^T \!\text{E}_{\vec{\bm{v}}^t \sim p_t}[\log f_G(\vec{\bm{v}}^t;\Theta)]
			    -\lambda||\bm{W}_{G^*}||_{1}+ \lambda||\bm{W}_G||_{1}\\
				&=-\inf_{\Theta} -\sum_{t=1}^T \text{E}_{\vec{\bm{v}}^t\sim p_t} [\log f_{G^*}(\vec{\bm{v}}^t;\Theta)]- \sum_{t=1}^T\text{E}_{\vec{\bm{v}}^t} [\log p_t(\vec{\bm{v}}^t)]\\
				&+(\inf_{\Theta}-\sum_{t=1}^T \text{E}_{\vec{\bm{v}}^t\sim p_t} [\log f_G(\vec{\bm{v}}^t;\Theta)]+\sum_{t=1}^T \text{E}_{\vec{\bm{v}}^t\sim p_t} [\log p_t(\vec{\bm{v}}^t)])\\
				&- \lambda||\bm{W}_{G^*}||_{1} + \lambda||\bm{W}_G||_{1}\\
				&=\inf_{\Theta} \sum_{t=1}^T KL(p_t(\vec{\bm{v}}^t)||f_G(\vec{\bm{v}}^t;\Theta)) \\
				&-\inf_{\Theta} \sum_{t=1}^T KL(p_t(\vec{\bm{v}}^t)||f_{G^*}(\vec{\bm{v}}^t;\Theta))+ \lambda||\bm{W}_G||_{1}- \lambda||\bm{W}_{G^*}||_{1}\\
				&=\inf_{\Theta} \!\sum_{t=1}^T KL(p_t(\vec{\bm{v}}^t)||f_G(\vec{\bm{v}}^t;\Theta)) \!+\! \lambda(||\bm{W}_G||_{1}\!- \!||\bm{W}_{G^*}||_{1})\\\nonumber
			\end{aligned}
		\end{eqnarray}
	}
	where the last equation holds because of the first assumption, that is, $F$ can perfectly fit the data generation mechanism.
	
	We define $M(G)$ as the set of distributions which are dynamically coherent with graph $G$, that is, for a distribution $f\in M(G)$, if two disjoint node set A and B are d-separated by C in G, then $f([\vec{\bm{v}}^t]_A|[\vec{\bm{v}}^{s:1}]_C)f([\vec{\bm{v}}^t]_B|[\vec{\bm{v}}^{s:1}]_C) = f([\vec{\bm{v}}^t]_A, [\vec{\bm{v}}^t]_B|[\vec{\bm{v}}^{s:1}]_C)$ for any $s<t$. Based on $M(G)$, the following lemma is immediate.
	\begin{lemma}\label{lm}
	  If $p\not \in M(G)$, then $\inf_{f\in M(G)}KL(p||f)>0$.
	\end{lemma}
	According to the definition of $f_{G}$, we have $f_{G}\in M(G)$. Then we consider four cases to demonstrate $S(G^*) - S(G) > 0$, which is similar to~\cite{brouillard2020differentiable}.

	$\bullet$ There is an edge ``$i\rightarrow j$'' in $G^*$, but they are not connected in $G$.
	In this case, $i$ and $j$ are d-separated by $\mathcal{V}\backslash L_{ij}$ in G, where $L_{ij}$ is the set not descendant nodes of $i$ and $j$.
	Because $i$ and $j$ are not d-separated by $\mathcal{V}\backslash L_{ij}$ in $G^*$, they are not independent conditioned on the history $\mathcal{V}\backslash L_{ij}$ based on $p_t$, thus $p_t\not\in M(G)$.
	According to lemma~\ref{lm}, $\inf KL(p_t||f_{G})>0$, thus $\sum_{t=1}^T KL(p_t||f_{G})>0$.
	If $||\bm{W}_G||_{1}>||\bm{W}_{G^*}||_{1}$, then $S(G^*) - S(G) > 0$.
	If $||\bm{W}_G||_{1}<||\bm{W}_{G^*}||_{1}$, then $S(G^*) - S(G) > 0$, when: 
	% $\lambda< \frac{\inf \sum_{t=1}^T KL(p_t(\vec{\bm{v}}^t)||f_{G}(\vec{\bm{v}}^t;\Theta))}{||\bm{W}_{G^*}||_{1}- ||\bm{W}_{G}||_{1}}$.
	\begin{eqnarray}\label{note}
		\begin{aligned}
				\lambda< \frac{\inf \sum_{t=1}^T KL(p_t(\vec{\bm{v}}^t)||f_{G}(\vec{\bm{v}}^t;\Theta))}{||\bm{W}_{G^*}||_{1}- ||\bm{W}_{G}||_{1}}.
		\end{aligned}
	\end{eqnarray}
	% $\lambda< \frac{\inf \sum_{t=1}^T KL(p_t(\vec{\bm{v}}^t)||f_{G}(\vec{\bm{v}}^t;\Theta))}{||\bm{W}_{G^*}||_{1}- ||\bm{W}_{G}||_{1}}$, $S(G^*) - S(G) > 0$.
	
    $\bullet$  There is an edge ``$i\rightarrow j$'' in $G$, but they are not connected in $G^*$.
	In this case, if there is an edge in $G^*$ but not in $G$, then we return to (\romannumeral1).
	Thus, we have $||\bm{W}_G||_{1}>||\bm{W}_{G^*}||_{1}$, which lead to $S(G^*) - S(G) > 0$.
	
	$\bullet$  $G$ and $G^*$ share the same skeleton, and there is a v-structure ``$i\rightarrow l \leftarrow j$'' in $G^*$ but not in $G$.
	To begin with, $l\not\in L_{ij}$ in $G$, otherwise there is cycle in the graph.
	As a result, $i$ and $j$ are not independent given $\mathcal{V}\backslash L_{ij}$, since $G$ and $G^*$ share the same skeleton.
	However, $i$ and $j$ are d-separated by $\mathcal{V}\backslash L_{ij}$ in G, thus $p_t\not\in M(G)$, and $S(G^*) - S(G) > 0$.
	
	$\bullet$  $G$ and $G^*$ share the same skeleton, and there is a v-structure ``$i\rightarrow l \leftarrow j$'' in $G$ but not in $G^*$.
	In this case, there must be a path $(i,l,j)$ in $G^*$, which is not d-separated by $\mathcal{V}\backslash L_{ij}$ in $G^*$, since $l\not\in \mathcal{V}\backslash L_{ij}$.
	However, $i$ and $j$ are d-separated by $\mathcal{V}\backslash L_{ij}$ in G, thus $p_t\not\in M(G)$, and $S(G^*) - S(G) > 0$.
   
\end{proof} 

This theory tells us that if we learn our model based on objective~(\ref{ini-loss}), then the obtained causal graph is Markov equivalent to the true causal graph.
It provides theoretical guarantees for our framework, which makes an initial step towards studying the causal identifiability problem in the context of sequential recommendation.

\section{Related Work}\label{rela}
This work stands on the intersection between sequential recommendation and causal inference.
In this section, we briefly introduce the recent advances in these fields and analyze the relations between our framework and these studies.

\textbf{Relation with sequential recommendation.}
Sequential recommendation has recently attracted great interests from both research and industry communities.
Early models like FPMC~\cite{rendle2010factorizing} assumes that user behaviors are only determined by the most recent actions.
Obviously, such Markov assumptions are limited, which can not explore the influence from the behaviors happened longer before. 
With the ever prospering of deep learning techniques, such problem has been alleviated by neural sequential recommender models.
For example, GRU4Rec~\cite{hidasi2015session} leverages recurrent neural network to summarize all the user history behaviors, and HRNN~\cite{quadrana2017personalizing} further extends it with personalized considerations.
When there are multiple history items, an important problem is how to discriminate their importances for the target item.
The above Markov-based models actually have tackled this problem by a heuristic rule, that is, all the history items are not important except the most recent one.
However, such rule may not always hold in practice.
To determine history item importances in a {softer} manner, people have designed quite a lot of attention-based models.
For example, NARM~\cite{li2017neural} combines attention mechanism with gated recurrent unit.
STAMP~\cite{liu2018stamp} uses attention mechanism to separate user long- and short-term engagements.
Bert4Rec~\cite{sun2019bert4rec} leverages self-attention mechanism to capture item long-term dependencies.
Our work continues this research line, focusing on better capturing item correlations.
However, we aim to discover causal relations, which is a significant extension.

\textbf{Relation with causal inference.}
Causal inference stems from applied statistics, and is increasingly leveraged to empower machine learning models.
In general, there are two major tasks in causal inference, that is, causal estimation and causal discovery.
The first problem aims to predict the values of some variables given the causal graph, while the other one aims to learn the causal graph based on a set of observational data.
These two problems are mutually inverse, and our work is more related with causal discovery.
In the past decades, there are mainly two types of causal discovery methods.
The first one is constrain-based, where the causal graph is determined based on conditional independence tests~\cite{spirtes2000causation,meek2013causal,zhang2008completeness}. The second one is score-based, where each causal graph is assigned with a score, and the final result is determined among the graphs with higher scores and satisfying the directed acyclic requirement~\cite{chickering2002optimal,chickering1997efficient,heckerman1995learning,bouckaert1993probabilistic}.
Our work is based on the second method, and we apply it to sequential recommendation, which makes an early attempt on studying causal discovery in user behavior modeling.
Besides, we provide theoretical analysis on the identifiability problem in the context of sequential recommendation.

\section{Experiments}\label{exper}
% In this section, we conduct extensive experiments to demonstrate the superiorities of our model, where we focus on the following research questions:
% \textbf{RQ1}. Whether our model can achieve better recommendation performance comparing with the state-of-the-art models?
% \textbf{RQ2}. How different hyper parameters influence our model performance?
% \textbf{RQ3}. Whether different components in our model are necessary?
% \textbf{RQ4}. Whether our model can provide more human accessible recommendation explanations?
% % \item \textbf{RQ5}. How do the explanations provided by our model look like?
% In the following, we begin by introducing the experiment setup, and then answer the above questions by presenting and analyzing the empirical results.

\subsection{Experiment Setup}

\textbf{Datasets.}
Our experiments are conducted based on the following real-world datasets:
\textbf{Epinions}\footnote{https://cseweb.ucsd.edu/~jmcauley/datasets.html} is a dataset collected from Epinions.com, which includes user ratings and reviews on the products from different categories.
{\textbf{Foursquare}\footnote{https://www.kaggle.com/datasets/chetanism/foursquare-nyc-and-tokyo-checkin-dataset} is a location based recommendation dataset containing user check-ins of the restaurants in Tokyo for about 10 month.}
\textbf{Amazon-Baby}, \textbf{Amazon-Patio} and \textbf{Amazon-Video} are e-commerce datasets\footnote{http://jmcauley.ucsd.edu/data/amazon/}, which are collected from Amazon.com. In these datasets, we have user purchasing records in different product categoties spanning from May 1996 to July 2014. 

% \textbf{Wechat}\footnote{https://algo.weixin.qq.com/} is an industry dataset containing the interactions between the users and micro videos. 
% \textbf{Mind}\footnote{https://msnews.github.io/} is a well-known news recommendation dataset, and we know which news a user has clicked at the same time.
For the first four datasets, the item raw features are obtained based on the item descriptions, where each word is represented by an embedding based on GloVe\footnote{https://nlp.stanford.edu/projects/glove/}, and then all the word embeddings are averaged to derive the final item raw features.
For the dataset of Foursquare, the raw features are derived based on the GPS coordinates of the check-in place.
% For alleviating the cold start problem, we remove the users and items which has less than five interactions.
{The basic statistics of our datasets are summarized in Table~\ref{rec-dataset}, and we plot the distribution of the sequence length for each dataset in Fig.~\ref{fig_data}.
We can see all the datasets are extremely sparse which demonstrates the high challenge of our task.
Besides, the number of interactions varies a lot across different datasets, which can help to verify the generality of our model for different data characters.
}

\begin{figure}[t]
	\centering
	\setlength{\fboxrule}{0.pt}
	\setlength{\fboxsep}{0.pt}
	\fbox{
		\includegraphics[width=1.\linewidth]{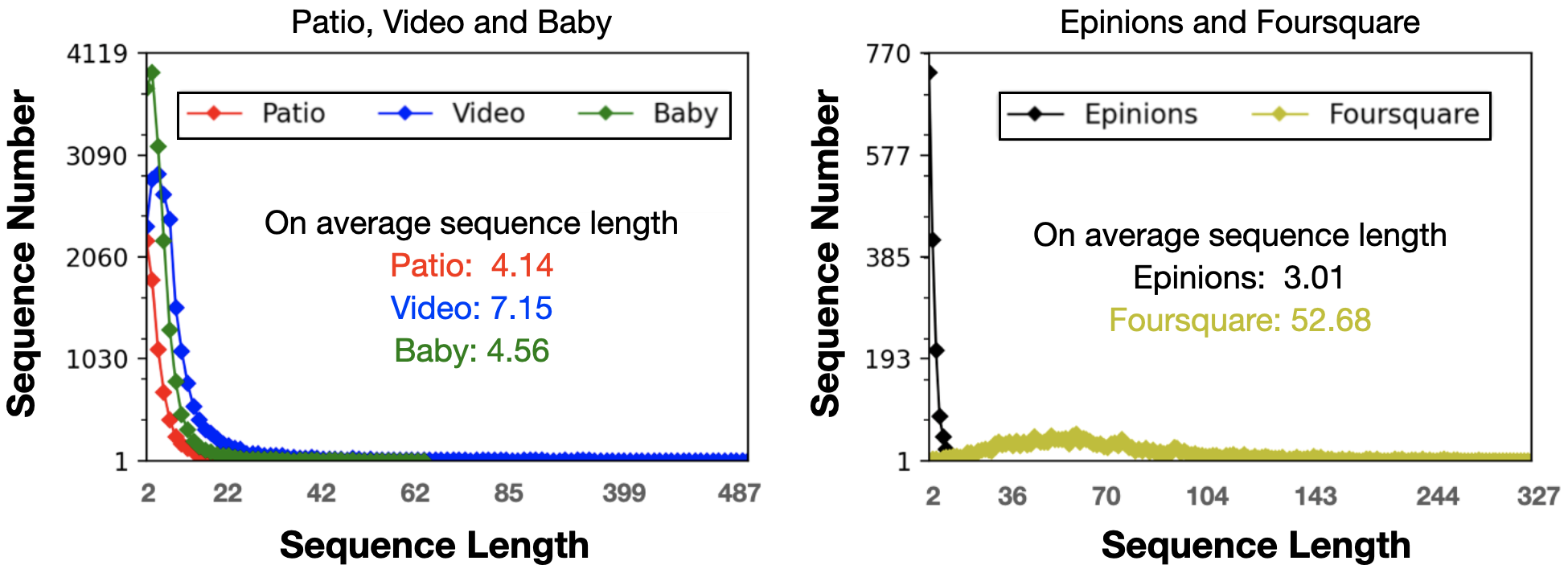}
	}
	\vspace{-0.3cm}
	\caption{{{Distributions of the sequence length for each dataset.
	For saving the space, we merge the statistics of Patio, Baby and Video in the left figure, while the other datasets are summarized in the right one.}}}
	\label{fig_data}
	\vspace{-0.2cm}
\end{figure}

\begin{table}[t]
	\centering
	\caption{{Statistics of the datasets, where the prefix ``Amazon-'' is omitted, and ``SeqLen'' is the average sequence length per user.}}
	\vspace{-0.cm}
	\renewcommand\arraystretch{1.2}
	\scalebox{1.}{
		\begin{threeparttable} 
			\begin{tabular}{p{1.1cm}<{\centering}|p{.8cm}<{\centering}|p{.8cm}<{\centering}|p{1.5cm}<{\centering}|p{1.1cm}<{\centering}|p{.9cm}<{\centering}}
				\hline\hline
				Dataset  &\# User   &\# Item  &\# Interaction&\# SeqLen &Sparsity  \\ \hline
				{Epinions}&1,530&683&4,600&3.01 &99.56\%\\
				{Foursquare}&2,292&5,494&120736&52.68&99.04\%\\
				{Patio}&7,153&2,952&29,625&4.14&99.86\%\\
				{Baby}&16,898&6,178&77,046&4.56&99.93\%\\
				{Video}&19,939&9,275&142,658&7.15&99.92\%\\\hline\hline
			\end{tabular}
	\end{threeparttable} }
	\label{rec-dataset}
	\vspace{-0.cm}
\end{table}

\noindent
\textbf{Baselines.}
We compare our model with the following representative baselines:
\textbf{BPR}~\cite{BPR} is a {well-known} recommender model for capturing user implicit feedback, where the prediction model is specified as matrix factorization.
\textbf{NCF}~\cite{ncf} is a neural recommender model, where the backbone is a combination between two types of generalized matrix factorization models.
\textbf{GRU4Rec}~\cite{hidasi2015session} is a sequential recommender model based on the gated recurrent unit, where each history item is regarded as the input of each step.
\textbf{NARM}~\cite{li2017neural} is a sequential recommender model based on the attention mechanism.
\textbf{STAMP}~\cite{liu2018stamp} is a sequential recommender model which combines the modeling of user short- and long-term preferences.
\textbf{SASRec}~\cite{SASRec} is a sequential recommender model based on the self-attention mechanism, which can better capture user long-term preference dependencies.
For fair comparisons, we also include two baselines which incorporate side information.
In specific, 
{\textbf{VTRNN~\cite{cui2016visual}} is a sequential recommender model, where the embeddings of the side information is fused into the model inputs.}
{\textbf{MMSARec}~\cite{han2020sequential} is a sequential recommender model by encoding the side information into the model architectures.}

For our model, we use Causer (LSTM) and Causer (GRU) to discriminate different implementations of $g$.
In sequential models, if there are multiple items at a step, we firstly organize them into a multi-hot vector, and then multiply it with a parameter matrix to derive the input embedding, which is similar to our model.
{For the baselines of VTRNN and MMSARec, the side information is set as the same item raw features as used in our models.}

\noindent
\textbf{Implementation details.}
In the experiment, we firstly organize the interactions of each user according to the time information.
If many items are interacted by a user at the same time, then they are assembled into a multi-hot vector before inputting into the model.
If there is only one item at some time, then the input is a one-hot vector.
Following the common practice~\cite{li2017neural,liu2018stamp}, the last and second last interaction sets of each user are used for model testing and validation, while the others are left for training. A slight difference between our setting and previous work is that the predicted results may be compared with an item set, instead of just a single item.
The widely used metrics including $F_1$ and NDCG are leveraged for model evaluation.
%  equation~(\ref{metric}) formula of $F_1$ and NDCG
% % In our experiments, we recommend five items to compare with the ground truth.
% % {
% % \begin{eqnarray}\label{metric}
% % 	\begin{aligned}
% % 		&\mathrm {F_1@K} =\\
% % 		&\frac{1}{|\mathcal{U}|}\sum_{u \in \mathcal{U}}\frac{2\cdot \frac{|\hat{R}(u) \cap R(u)|}{|\hat {R}(u)|} \cdot \frac{|\hat{R}(u) \cap R(u)|}{|R(u)|}}{\frac{|\hat{R}(u) \cap R(u)|}{|\hat {R}(u)|}+\frac{|\hat{R}(u) \cap R(u)|}{|R(u)|}} \\
% % 		&\mathrm {NDCG@K} =\\
% % 		& \frac{1}{|\mathcal{U}|}\sum_{u \in \mathcal{U}} (\frac{1}{\sum_{i=1}^{\min (|R(u)|, K)}
% % 			\frac{1}{\log _{2}(i+1)}} \sum_{i=1}^{K}  \frac{\mathbb{I}(i \in R(u))}{\log _{2}(i+1)})
% % 	\end{aligned}
% % \end{eqnarray}
% % }
More specifically, suppose $A_{u}$ and $B_{u}$ are the set of items recommended to user $u$ and the ones actually purchased by them in the testing set.
$Z$ is the number of recommended items.
$R(i)$ is the relevance score, where $R(i) = 1$ if the $i$th recommended item belong to $B_{u}$, otherwise $R(i)=0$.
Then the formulas for computing $F_1$ and NDCG are:
\begin{eqnarray}
\begin{aligned}
\text{P(u)}@{Z} &= \frac{\left| A_{u}\cap B_{u} \right|}{\left| A_{u} \right|}\\
\text{R(u)}@{Z} &= \frac{\left| A_{u}\cap B_{u} \right|}{\left| B_{u} \right|}\\
\mathrm {F_1@Z} &= \frac{1}{|\mathcal{U}|}\sum_{u \in \mathcal{U}}\frac{2\cdot \text{P(u)}@{Z}\cdot \text{R(u)}@{Z}} {\text{P(u)}@{Z} + \text{R(u)}@{Z}} \\
\text{DCG}_u@{Z} & = \sum_{i=1}^{\text{Z}}{\frac{R(i)}{log_2(i+1)}} \\
\text{NDCG}@{Z}  & = \frac{1}{|\mathcal{U}|} \sum_{u\in \mathcal{U}}\frac{\text{DCG}_u@Z} {\text{IDCG}}
\nonumber
\end{aligned}
\end{eqnarray}
where IDCG is the max value of $\text{DCG}_u@{Z}$~\cite{buckland1994relationship}.
In the experiments, five items are recommended from each model to compare with the ground truth, that is $Z=5$.
The parameters in our model are determined based on grid search, and we summarize the parameter tuning ranges in Table~\ref{tuning}.
The parameters in the baselines are set as their default values in the original papers or tuned in the same ranges as our model's.

\begin{table}[t]
	\centering
	\caption{{The tuning ranges of our model parameters.}}
	\vspace{-0.cm}
	\renewcommand\arraystretch{1.3}
	\scalebox{1.}{
		\begin{threeparttable} 
			\begin{tabular}{p{2.cm}<{\centering}|p{5.8cm}<{\centering}}
				\hline\hline
                Parameter & Tuning range\\ \hline
                Batch size &$\{32,64,128,256,512,1024\}$\\ 
                Learning rate &$\{10^{-1},10^{-2},10^{-3},10^{-4},10^{-5}\}$\\ 
                Embedding size&$\{32,64,128,256\}$\\ 
                $\epsilon$ &$\{0.1,0.2,0.3,0.4,0.5,0.6,0.7,0.8,0.9\}$\\ 
                $\eta$&$\{10^{-8},10^{-6},10^{-4},10^{-2},1,10^{2},10^{4},10^{6},10^{8}\}$\\ 
                K & $\{2,3,4,5,6,7,8,9,10,20,30,...,100\}$ \\ 
                $\lambda $ &$\{10^{-8},10^{-6},10^{-4},10^{-2},1,10^{2},10^{4},10^{6},10^{8}\}$\\ \hline
				\hline
			\end{tabular}
	\end{threeparttable} }
	\label{tuning}
	\vspace{-0.cm}
\end{table}

\begin{table*}[!t]
	\caption{{{Overall comparison between our models and the baselines. 
				We use bold fonts to label the best performance.
				All the numbers are percentage values with ``\%'' omitted. 
				$^*$ indicates the performance improvements of our model against the best baseline are significant under paired-t test with ``$p<0.05$''.}}}
	\center
	\renewcommand\arraystretch{1.2}
	\vspace{-0.cm}
	\setlength{\tabcolsep}{13.1pt}
	\begin{threeparttable}  
		\scalebox{1.}{
			\begin{tabular}
				{p{0.2cm}<{\centering}p{0.4cm}<{\centering}|
					p{0.2cm}<{\centering}p{0.8cm}<{\centering}|
					p{0.2cm}<{\centering}p{0.8cm}<{\centering}|
					p{0.2cm}<{\centering}p{0.8cm}<{\centering}|
					p{0.2cm}<{\centering}p{0.8cm}<{\centering}|
					p{0.2cm}<{\centering}p{0.8cm}<{\centering}
				} \hline\hline
			
				\multicolumn{2}{c|}{Datasets}&\multicolumn{2}{c|}{Epinions}&\multicolumn{2}{c|}{Baby}&
				\multicolumn{2}{c|}{Patio}&\multicolumn{2}{c|}{Video}&\multicolumn{2}{c}{Foursquare}\\ \hline
				\multicolumn{2}{c|}{Metric (@5)} & F1 & NDCG & F1 & NDCG & F1 & NDCG & F1 & NDCG& F1 & NDCG \\ \hline
				\multicolumn{2}{c|}{BPR}            & 0.63 & 1.28 & 0.72 & 1.33 & 0.37  & 0.61 & 1.08 & 2.11& 2.45 &4.76 \\ 
				\multicolumn{2}{c|}{NCF}           & 1.00 & 1.42 & 0.90 & 1.67 & 0.53 & 1.09 & 0.92 & 1.97 & 3.05 &6.28\\ 
				\multicolumn{2}{c|}{GRU4Rec}  & 0.97 & 1.61 & 0.90  & 1.68 & 0.37 & 0.75 & 0.95 & 2.01 & 3.05 &6.32 \\
				\multicolumn{2}{c|}{STAMP}      & 1.05 & 1.95& 0.88 & 1.67  & 0.47 & 1.03 & 0.95 & 1.99 & \textbf{3.08}  &6.32 \\ 
				\multicolumn{2}{c|}{SASRec}     & 1.00 & 1.45& 0.90 & 1.67 & 0.48 & 0.89 & 1.02 & 2.02& 3.05  &6.26  \\ 
				\multicolumn{2}{c|}{NARM}        & 1.08 & 1.93  & 0.90 & 1.68 & 0.38  & 0.72 & 1.48  & 2.90 & 2.80 &6.06  \\ 
				\multicolumn{2}{c|}{VTRNN}      & 0.55 & 1.52 & 0.83 & 1.51  & 0.60 & 1.05 & 1.53 & 2.91& 3.05  &5.26 \\ 
				\multicolumn{2}{c|}{MMSARec} & 0.97 & 1.48 & 0.90 & 1.66 & 0.42 & 0.69 & 1.88 & 3.42& 3.05 &6.30\\ 
				\hline
				\multicolumn{2}{c|}{Causer (LSTM)} & \textbf{1.17}$^*$ & 2.00& 0.90 & 1.68  & 0.69 & 1.35 & 1.91 & 3.51 & 3.05 &6.34 \\ 
				\multicolumn{2}{c|}{Causer (GRU)} & 1.13 & \textbf{2.17}$^*$& \textbf{0.92} & \textbf{1.71}$^*$  & \textbf{0.71}$^*$ & \textbf{1.46}$^*$  & \textbf{1.95}$^*$ & \textbf{3.63}$^*$&  \textbf{3.08}  &  \textbf{6.36} \\

				\hline \hline 
			\end{tabular}
		}        
	\end{threeparttable}    
	\label{tab:ab-result}   
	\vspace{-0.cm}
\end{table*}

\subsection{Overall Performance Comparison}\label{overall}
The overall comparison results are presented in Table~\ref{tab:ab-result}, from which we can see:
{the performance of all the models are not high, which is because of the sparse and noisy nature of the recommendation datasets, and verifies the difficulties of the recommendation task itself.}
Neural models like NCF can usually obtain better performance than the shallow model BPR, which agrees with the previous work~\cite{ncf}, and verifies the usefulness of modeling user-item non-linear relationships.
Among sequential recommender models, the best performance is usually achieved when the model is based on the attention mechanism or has side information.
This is as expected, since the attention mechanism can highlight the items which are more important for the next item prediction, and the side information can provide additional signals to profile the items, which facilitates more comprehensive collaborative feature modeling and improved the recommendation performance. 

Encouragingly, our model can achieve the best performance on all the metrics across different datasets, where the improvements are mostly significant.
On average, our model can improve the best baseline by about {6.1\% and 11.3\%} on $F_1$ and NDCG, respectively.
Comparing with the baselines, we introduce a causal discovery module to filter the causally irrelevant history information.
This module lowers the negative effects from the spurious item correlations, and makes our model focus on the real causal relations among user behaviors, which improves the recommendation performance.
Between different implementations of $g$, we find that GRU is superior than LSTM in most cases.
We speculate that GRU is a lighter architecture, which can be more appropriate for the sparse recommendation datasets.
LSTM contains too many parameters, which may easily over-fit the training data, and lead to the lowered performance on the testing set.

\subsection{Parameter Analysis}
In this section, we analyze the influences of the latent cluster number K, the threshold $\epsilon$ and the temperature $\eta$ on the model performance.
When studying one parameter, we fix the other ones as their optimal values.
The results are reported based on NDCG and the datasets of Baby and Epinions.
The conclusions on the other metric and datasets are similar and omitted.

\subsubsection{Influence of the number of latent clusters K}
In our model, the hyper parameter K basically encodes our belief on how may latent clusters is appropriate to cover the item properties.
In order to study its influence, we tune it in [2,3,4,5,6,7,8,9,10,20,30,40,50,60,70,80,90,100]. 
From the results shown in Figure~\ref{fig_k}, we can see:
the best performance is usually achieved when K is {relatively} small.
We speculate that too large K (\emph{e.g.}, 100) may introduce too many redundant parameters, which may enhance the risk of model over-fitting, and lower the model generalization capability on the testing set.
For Baby, the best K is between 4 and 6, while for Epinions, $K\in [15, 20]$ can usually lead to the better performances.
We speculate that, for Baby, the products are all about baby toys, baby feedings and so on.
The items are quite homogeneous, thus only a small number of clusters can be enough to characterize them.
For Epinions, the items are much more diverse, ranging from the fields of electronics and office to the sports and travel.
As a result, more clusters are needed to sufficiently cover the item features.
In addition, we find that K cannot be too small, for example, when $K\in [2, 4]$, the performances on both datasets are not satisfied.
The reason can be that when K is too small, the clusters are not expressive enough to characterize the item space.
Many semantically irrelevant items are mixed into the same cluster, which brings difficulties for learning causal relations among them, and leads to less effective history information purification and lowered performances.

\subsubsection{Influence of the threshold $\epsilon$}
According to equation~(\ref{cx}), the threshold $\epsilon$ determines how severe we filter the causally irrelevant history information.
When we set $\epsilon$ as a larger value, only the items which are more likely to be the cause of the target item are remained.
However, at the same time, the number of left items can be small, which may impact the accuracy of the item correlation modeling.
When we set $\epsilon$ as a small value, more items are incorporated into the training process, but they may be causally less relevant, which may bring more noises.
In order to study the influence of $\epsilon$, we tune it in the range of [0.1,0.2,0.3,0.4,0.5,0.6,0.7,0.8,0.9].
We present the results in Figure~\ref{fig_epsilon}, where we can see:
for both datasets and different sequential architectures, our model can achieve the best performance when $\epsilon$ is moderated.
This agrees with our above analysis.
By trading-off the number of training samples and the causal degree of their relations, $\epsilon$ provides us with the opportunity to achieve an equilibrium point, which {leads} to better recommendation performances.

\subsubsection{Influence of the temperature parameter $\eta$}
In our model, $\eta$ determines the softness of the cluster assignment distribution.
In this section, we study its influence on the model performance by tuning it in the range of $[10^{-8},10^{-6},10^{-4},10^{-2},1,10^{2},10^{4},10^{6},10^{8}]$.
From the results shown in Figure~\ref{fig_eta}, we can see:
for both datasets, when $\eta$ is small, the performance continually rises up as we increase $\eta$.
After reaching the optimal point, the model performance goes down as $\eta$ becomes larger.
For the same dataset, when we implement $g$ with different architectures, the best performances are achieved with similar $\eta$'s, which may suggest that $\eta$ is a robust parameter w.r.t. the sequential architecture.
For different datasets, the optimal $\eta$ varies a lot.
For Baby, the best performance is achieved when $\eta\in [1,10^{2}]$, while for Epinions, $\eta\in [10^{2}, 10^{4}]$ can usually lead to the better performances.
This observation manifest that $\eta$ is very sensitive to the datasets, and one has to carefully tune it when applying our model in different scenarios.

\begin{figure}[t]
	\centering
	\setlength{\fboxrule}{0.pt}
	\setlength{\fboxsep}{0.pt}
	\fbox{
		\includegraphics[width=.95\linewidth]{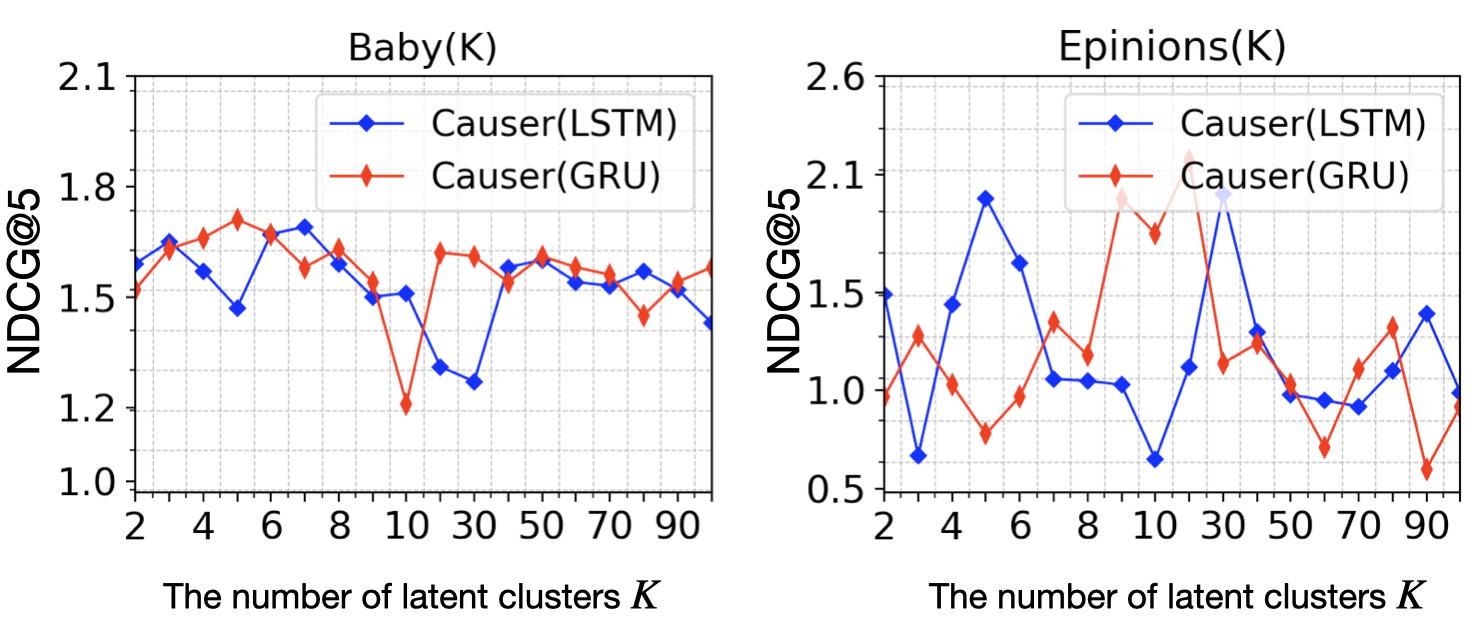}
	}
	\vspace{-0.1cm}
	\caption{{Influence of the number of latent clusters $K$ on the model performance in terms of NDCG@5.
	The performances of different sequential architecture implementations are labeled with different colors.
	{The results are percentage values with ``\%'' omitted.}
	}}
	\label{fig_k}
	\vspace{-0.1cm}
\end{figure}

\begin{figure}[t]
	\centering
	\setlength{\fboxrule}{0.pt}
	\setlength{\fboxsep}{0.pt}
	\includegraphics[width=.95\linewidth]{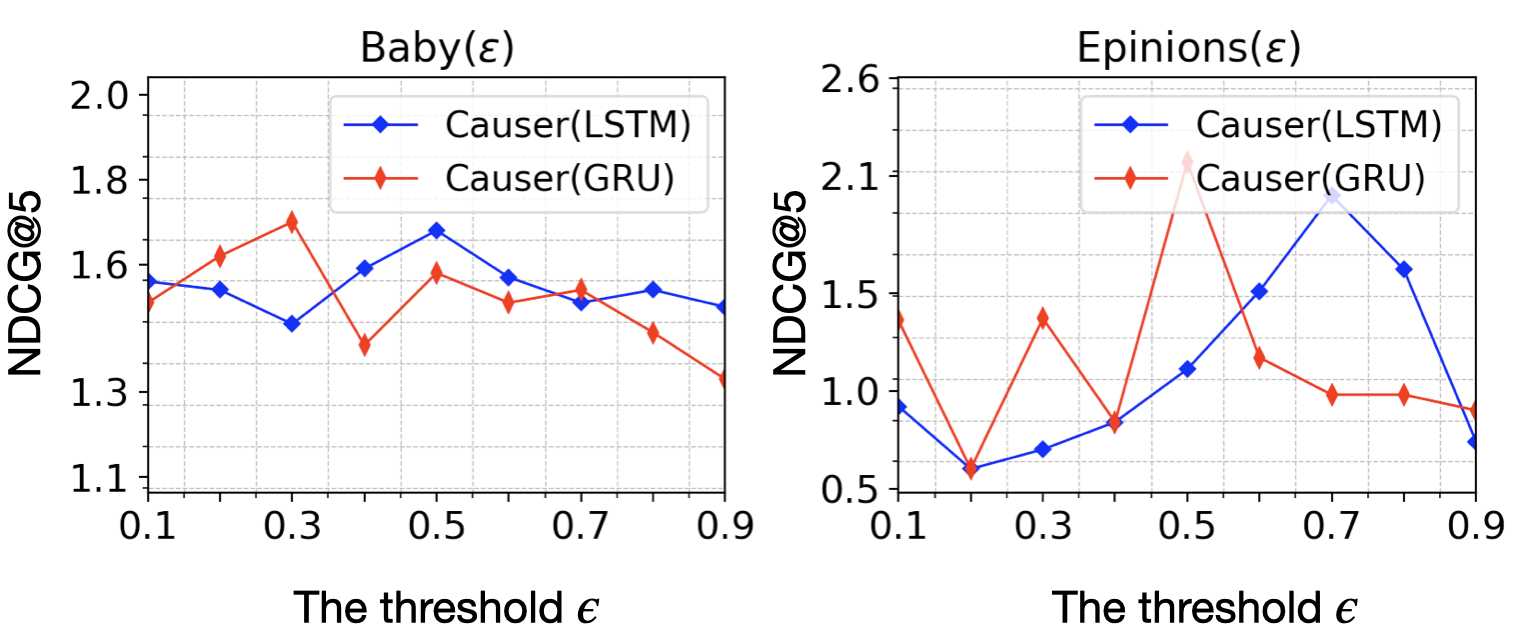}
	\vspace{-0.1cm}
	\caption{{Influence of the threshold $\epsilon$ on the model performance in terms of NDCG@5.
	The performances of different sequential architecture implementations are labeled with different colors.
	{The results are percentage values with ``\%'' omitted.}
	}}
	\label{fig_epsilon}
	\vspace{-0.cm}
\end{figure}

\begin{figure}[t]
	\centering
	\setlength{\fboxrule}{0.pt}
	\setlength{\fboxsep}{0.pt}
	\includegraphics[width=1.\linewidth]{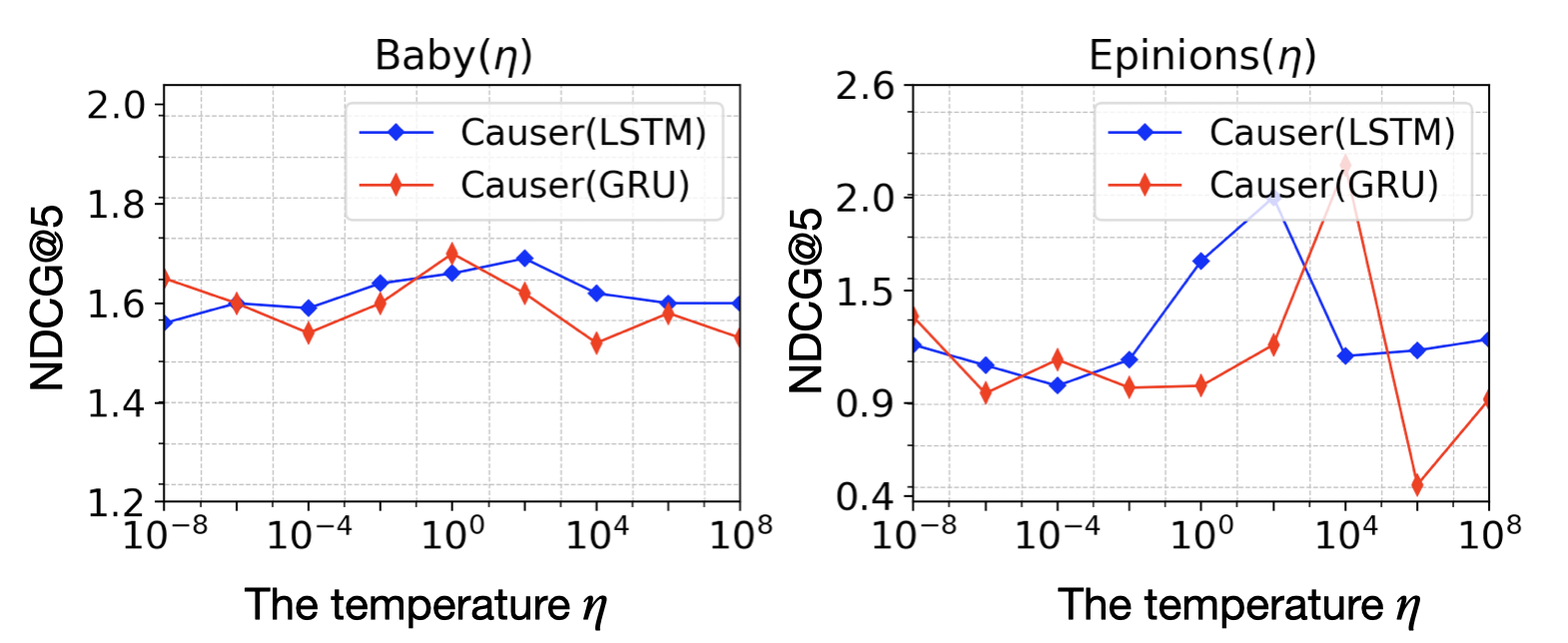}
	\vspace{-0.3cm}
	\caption{{Influence of the temperature $\eta$ on the model performance in terms of NDCG@5.
	The performances of different sequential architecture implementations are labeled with different colors.
	{The results are percentage values with ``\%'' omitted.}}}
	\label{fig_eta}
	\vspace{-0.1cm}
\end{figure}

\begin{table}[!t]
	\caption{Ablation studies: performance comparison between our model and its variants in terms of NDCG@5. {The results are percentage values with ``\%'' omitted.}}
	\center
	\renewcommand\arraystretch{1.2}
	\vspace{-0.cm}
	\setlength{\tabcolsep}{7.1pt}
	\begin{threeparttable}  
		\scalebox{1.}{
			\begin{tabular}
				{p{0.6cm}<{\centering}|
					p{1.cm}<{\centering}|p{1.cm}<{\centering}|
					p{1.cm}<{\centering}|p{1.cm}<{\centering}
				} \hline\hline
				\multicolumn{1}{c|}{$g$}&\multicolumn{2}{c|}{LSTM}&\multicolumn{2}{c}{GRU}\\ \hline
				\multicolumn{1}{c|}{Datasets} & Baby & Epinions & Baby & Epinions  \\ \hline
				\multicolumn{1}{c|}{Causal (-rec)} & 1.56 & 1.23 & 1.60 & 1.36  \\ 
				\multicolumn{1}{c|}{Causal (-clus)} & 1.59 & 1.47 & 1.64 & 1.35  \\ 
				\multicolumn{1}{c|}{Causal (-att)} & 1.65 & 1.89 & 1.69 & 1.95  \\ 
				\multicolumn{1}{c|}{Causal (-causal)} & 1.65 & 1.52 & 1.67 & 1.61  \\ \hline
				\multicolumn{1}{c|}{Causal} & \textbf{1.68} & \textbf{2.00} & \textbf{1.71} & \textbf{2.17}  \\ \hline\hline
			\end{tabular}
		}           
	\end{threeparttable}  
	\vspace{-0.cm}
	\label{ab-result-cx}
\end{table}

\subsection{Ablation Studies}
After evaluating our model as a whole, we would like to study whether different designs in our model are necessary.
To this end, we compare our model with its four {variants}:
\underline{\textit{Causer (-clus)}} is a method by removing the clustering loss~(\ref{deepclu}).
\underline{\textit{Causer (-rec)}} is a model, where we remove the reconstruction loss~(\ref{dec}).
\underline{\textit{Causer (-att)}} is a variant, where we do not use the attention mechanism. 
\underline{\textit{Causer (-causal)}} is a method, where we drop the causal relation $\hat{{W}}_{\vec{\bm{v}}_k^{t}b}$ in equation~(\ref{cx}).
Similar to the above experiment, we report the results based on NDCG and the datasets of Baby and Epinions, and the model parameters follow the settings in section~\ref{overall}.

From the results shown in Table~\ref{ab-result-cx}, we can see:
Causer performs better than Causer (-clus), which verifies the effectiveness of the clustering loss.
We speculate that the clustering loss ensures that similar item embeddings can be pulled into the same cluster, which is critical for accurately estimating the causal relations, and deriving clean and causal history representations to improve the recommendation performance.
Causer (-rec) performs worse than Causer.
This maybe because by introducing the reconstruction loss, the item embeddings are forced to encode the basic item properties, which is the foundation for accurately clustering items and learning causal relations, and thus important for the final results. 
It is interesting to see that when we remove the attention mechanism, the performance of Causer (-att) is lowered.
We speculate that for different local contexts, the item importances may also vary.
For the example in Figure 1, if there is another printer with different brand, then the importance of the original printer should be lowered, since the ink box may also be caused by the new printer.
This intuitive example suggests that it is reasonable to refine the global causal relations with the item local importance, which may bring improved recommendation performance.
As expected, the causal relation is indeed helpful, which is evidenced by the lowered performance of Causer (-causal) as compared with Causer. 
This observation demonstrates the effectiveness of our idea on capturing item causal relations.
By combining the clustering loss, reconstruction loss, local attentions and global causal relations, our final model can achieve the best performance on different datasets and sequential architectures. 
The above results confirm the effectiveness of our model designs, and demonstrate that all of them can contribute the final performance.

\subsection{Explanation Evaluation}
In the above sections, we have demonstrated the effectiveness of our model in boosting the recommendation performance, and also studied the influences of the hyper parameters.
In addition to performance improvement, our model can also provide more accessible recommendation explanations.
In order to evaluate the explanations generated by our model, we conduct both quantitative and qualitative experiments.
More specifically, we based our experiment on the dataset of Baby.
The model parameters are set as their optimal values tuned in section~\ref{overall}.

\subsubsection{Quantitative Analysis}
Existing recommendation datasets are mostly designed for evaluating the performance.
While we have noticed a few recommendation explanation datasets\footnote{https://competition.huaweicloud.com/information/1000041488/introduction}~\cite{li2021extra}, they are not for sequential recommendation, and do not contain ground truth on item causal relations.

To solve this problem, we manually label a new dataset for our quantitative analysis.
In specific, we select 1000 samples from the testing set of Baby.
For easy labeling and evaluation, we select the samples, where at each step, there is only one interacted item.
In order to accurately label causal relations between different items, we firstly teach the workers to separate causal and non-causal relations by presenting a large amount of commonly recognized examples. 
Then, for a testing sample, we ask the workers to label out 3 items from the history information, which are most likely to be the real cause of the target item. For each sample, we randomly assign three workers, and only the commonly labeled items are remained for controlling the quality of the dataset.
At last, we obtain a dataset containing {793} samples, and for each sample, we have on average {1.8} causal items in the history information.

Based on the above dataset, our quantitative analysis is conducted by comparing Causer (-att), Causer (-causal) and Causer.
In the experiments, for each item in the history information, we firstly compute the explanation score $\hat{{W}}_{\vec{\bm{v}}_k^{t}b}$, $\alpha_t$ and $\hat{{W}}_{\vec{\bm{v}}_k^{t}b}\alpha_t$ for Causer (-att), Causer (-causal) and Causer, respectively, and then the items with the largest scores are used to explain the recommendations.
For each model, we select 3 items to compare with the ground truth, where $F_1$ and NDCG are leveraged as the evaluation metrics.
% Higher value means the provided explanations are more aligned with the causal relations labeled by human.
The comparison results are presented in Figure~\ref{explane}, from which we can see:
Causer (-att) performs better than Causer (-causal) on both evaluation metrics, which suggests that our designed causal discovery module can be more important than the attention mechanism in providing causal item explanations.
The complete Causer model can consistently provide better explanations than its variants.
For Causer (-causal), the relations captured by the attention mechanism are based more on the item-concurrence, which cannot grantee causalities.
In our model, we intentionally design a causal discovery module to learn item causal relations, which can be more aligned with the dataset labeled based on causalities.
Besides, the lowered performance of Causer (-att) manifests that the local attention mechanism is also useful.
The best explanations are usually achieved by combing the local attention and global causal mechanisms.

\begin{figure}[t]
	\centering
	\setlength{\fboxrule}{0.pt}
	\setlength{\fboxsep}{0.pt}
	\includegraphics[width=1.\linewidth]{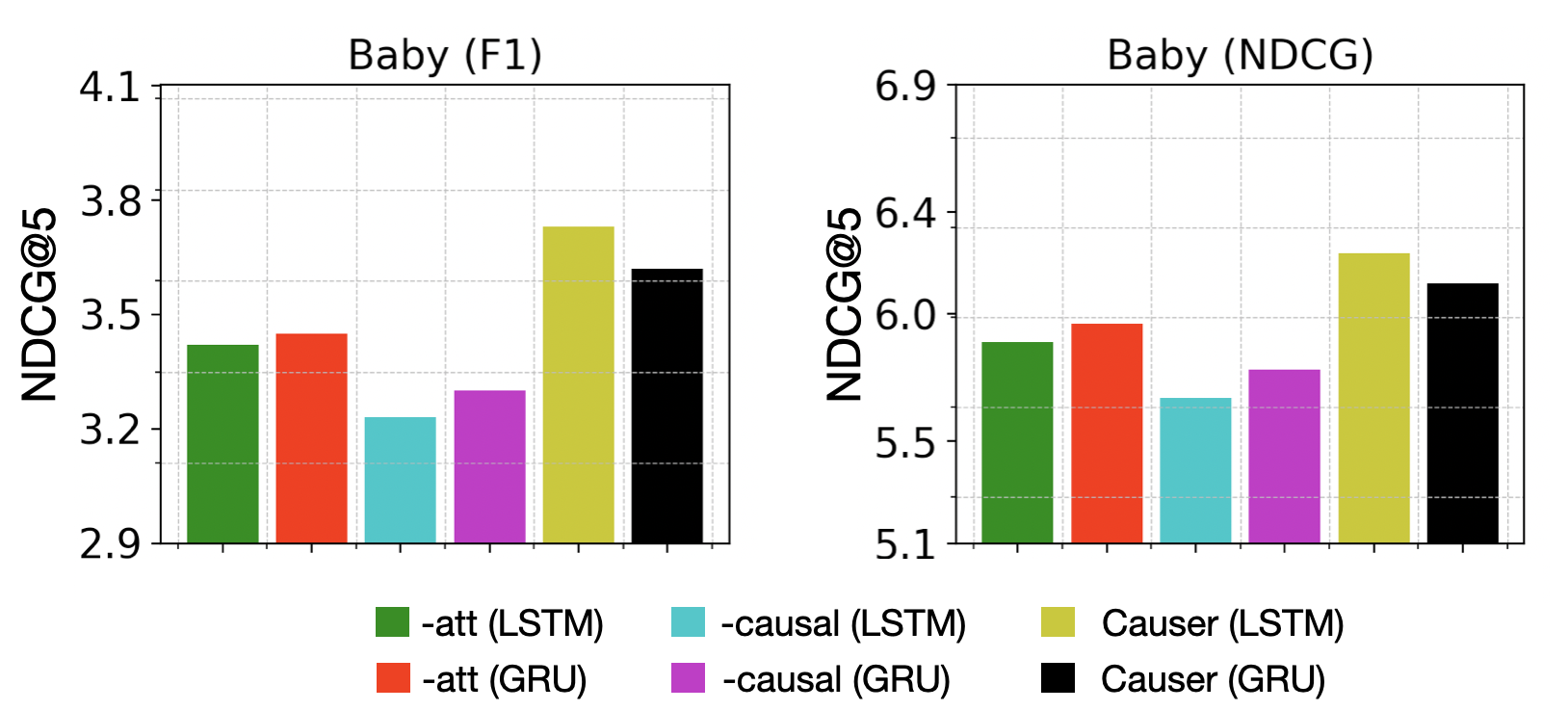}
	\vspace{-0.2cm}
	\caption{Quantitative evaluation results on the recommendation explanations.
    Different variants and sequential architectures are labeled with different colors.
    {The results are percentage values with ``\%'' omitted.}
	}
	\label{explane}
	\vspace{-0.1cm}
\end{figure}

\begin{figure}[t]
	\centering
	\setlength{\fboxrule}{0.pt}
	\setlength{\fboxsep}{0.pt}
	\includegraphics[width=1.\linewidth]{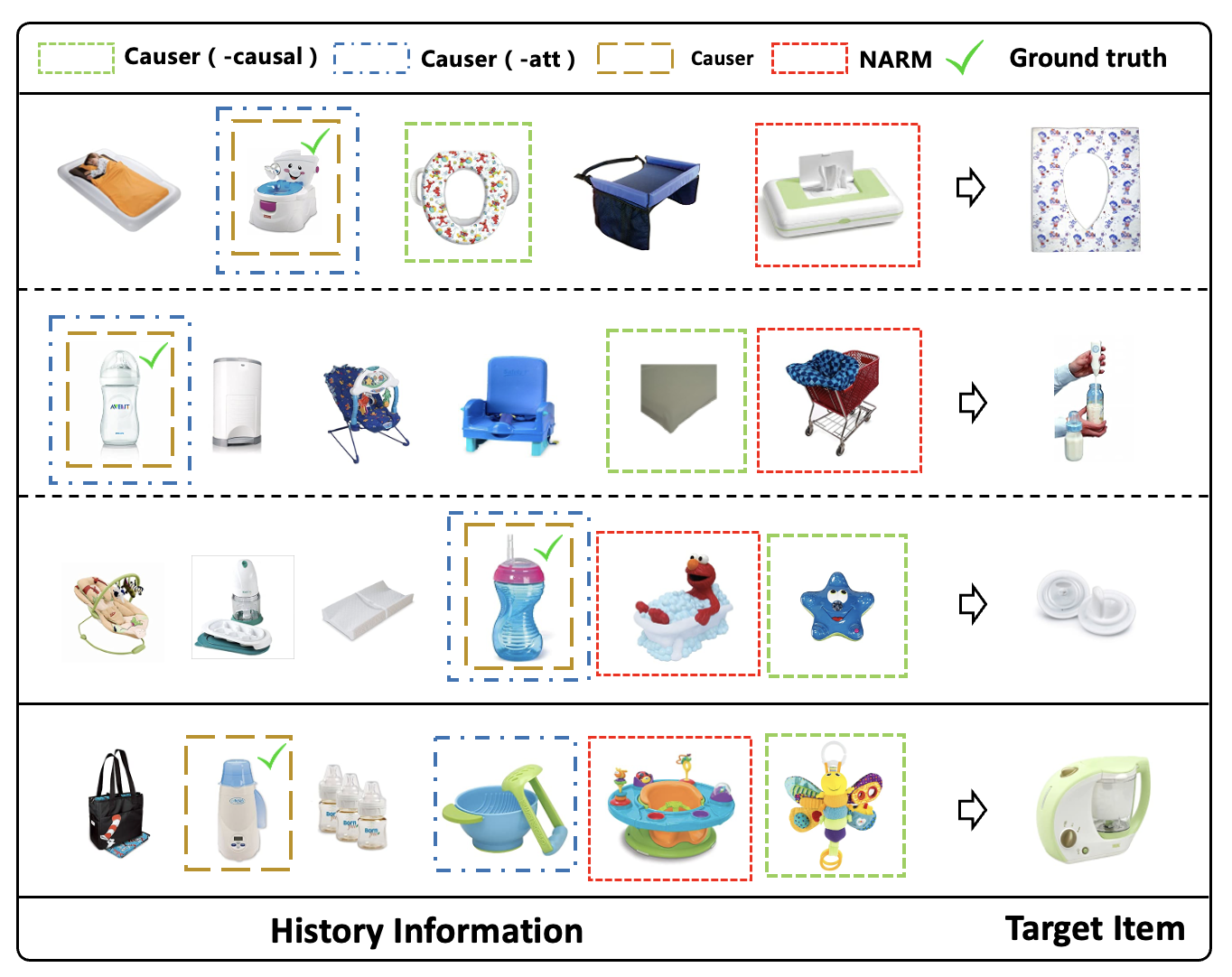}
	\vspace{-0.4cm}
	\caption{Qualitative evaluation results on the recommendation explanations. 
	Four cases are presented to illustrate the advantages of our model comparing with the purely attentive- and causal-based methods.}
	\label{case}
	\vspace{-0.cm}
\end{figure}

\subsubsection{Qualitative Analysis}
In order to provide more intuitive understandings on the provided explanations, in this section, we conduct many case studies, where the parameter settings follow the above experiments, and we compare the explanations generated by NARM, Causer (-att), Causer (-causal) and Causer.
From the results shown in Figure~\ref{case}, we can see:
in the first example, the target item is a toilet seat.
With the help of the causal discovery module, our model and Causer (-att) can successfully find out the real cause item, that is, the baby toilet.
{By removing the causal relation from our model, Causer (-causal) labels another toilet seat.
While it is related to the target item, the relation is not causal, and it is unreasonable to explain a toilet seat with another one.
NARM leverages the toilet paper as the explanation, which is also inappropriate.
}
Similar results can also be found in the second and third examples, where our model and Causer (-att) can accurately discover the causal relations from the feeding bottle to the milking machine or baby nipple.
In the last example, for the target item, \emph{i.e.}, a baby bottle warmer, our model explains it with the baby bottle, while Causer (-att) regards the bowl as the explanation, which is less reasonable.
This observation may suggest that both the attention and causal discovery module are useful for the accurate recommendation explanations.

\section{Conclusion, Limitation and Future Work}\label{conclu}
\textbf{Conclusion}. In this paper, we propose to improve sequential recommendation with causal discovery to capture causalities among user sequential behaviors.
We realize our idea by designing cluster-level causal graph and seamlessly infusing it into ordinary sequential recommender models.
Theoretical analysis {is} provided to demonstrate the identifiability of the causal graph learned in our framework.
Extensive experiments demonstrate that our model can improve the recommendation performance and explainability.

\textbf{Limitation and Future Work}. 
{This paper actually makes a first step towards causality enhanced sequential recommendation.
However, there are still many limitations, which left much room for improvement.
To begin with, we assume that the causal graph is static, which can not model the dynamic user preference.
In the future, an interesting direction is to introduce dynamic causal graph into our model, where the causal relation can be altered when the interaction times are different.
In our model, we do not consider the interactions which are not recorded in the datasets, if the cause of an item is not recorded, then the causal discovery module may fail. In the future, we plan to explicitly model the latent confounders of the causal graph, which may lead better recommendation performance and explainability.
At last, our model can only capture one-to-one causal relations, in the future, a promising direction is to extend our model to capture multi-to-one, one-to-multi and multi-to-multi causal relations.}

\section*{Acknowledgment}
This work is supported in part by National Natural Science Foundation of China (No. 62102420), Beijing Outstanding Young Scientist Program NO. BJJWZYJH012019100020098, Intelligent Social Governance Platform, Major Innovation \& Planning Interdisciplinary Platform for the "Double-First Class" Initiative, Renmin University of China, and Public Computing Cloud, Renmin University of China. 
The work is sponsored by Huawei Innovation Research Programs. We appreciate the support from Mindspore\footnote{\url{https://www.mindspore.cn}}, which is a new deep learning computing framework.

\bibliographystyle{IEEEtran}
\balance
\bibliography{acmart}

% Generated by IEEEtran.bst, version: 1.14 (2015/08/26)
\begin{thebibliography}{10}
\providecommand{\url}[1]{#1}
\csname url@samestyle\endcsname
\providecommand{\newblock}{\relax}
\providecommand{\bibinfo}[2]{#2}
\providecommand{\BIBentrySTDinterwordspacing}{\spaceskip=0pt\relax}
\providecommand{\BIBentryALTinterwordstretchfactor}{4}
\providecommand{\BIBentryALTinterwordspacing}{\spaceskip=\fontdimen2\font plus
\BIBentryALTinterwordstretchfactor\fontdimen3\font minus
  \fontdimen4\font\relax}
\providecommand{\BIBforeignlanguage}[2]{{%
\expandafter\ifx\csname l@#1\endcsname\relax
\typeout{** WARNING: IEEEtran.bst: No hyphenation pattern has been}%
\typeout{** loaded for the language `#1'. Using the pattern for}%
\typeout{** the default language instead.}%
\else
\language=\csname l@#1\endcsname
\fi
#2}}
\providecommand{\BIBdecl}{\relax}
\BIBdecl

\bibitem{rendle2010factorizing}
S.~Rendle, C.~Freudenthaler, and L.~Schmidt-Thieme, ``Factorizing personalized
  markov chains for next-basket recommendation,'' in \emph{Proceedings of the
  19th international conference on World wide web}, 2010, pp. 811--820.

\bibitem{hidasi2015session}
B.~Hidasi, A.~Karatzoglou, L.~Baltrunas, and D.~Tikk, ``Session-based
  recommendations with recurrent neural networks,'' \emph{arXiv preprint
  arXiv:1511.06939}, 2015.

\bibitem{liu2018stamp}
Q.~Liu, Y.~Zeng, R.~Mokhosi, and H.~Zhang, ``Stamp: short-term attention/memory
  priority model for session-based recommendation,'' in \emph{Proceedings of
  the 24th ACM SIGKDD International Conference on Knowledge Discovery \& Data
  Mining}, 2018, pp. 1831--1839.

\bibitem{rychlak1994logical}
J.~F. Rychlak, \emph{Logical learning theory: A human teleology and its
  empirical support}.\hskip 1em plus 0.5em minus 0.4em\relax U of Nebraska
  Press, 1994.

\bibitem{rychlak1986logical}
------, ``Logical learning theory: A teleological alternative in the field of
  personality,'' \emph{Journal of personality}, vol.~54, no.~4, pp. 734--762,
  1986.

\bibitem{li2017neural}
J.~Li, P.~Ren, Z.~Chen, Z.~Ren, T.~Lian, and J.~Ma, ``Neural attentive
  session-based recommendation,'' in \emph{Proceedings of the 2017 ACM on
  Conference on Information and Knowledge Management}, 2017, pp. 1419--1428.

\bibitem{zheng2018dags}
X.~Zheng, B.~Aragam, P.~Ravikumar, and E.~P. Xing, ``Dags with no tears:
  Continuous optimization for structure learning,'' \emph{arXiv preprint
  arXiv:1803.01422}, 2018.

\bibitem{ng2019graph}
I.~Ng, S.~Zhu, Z.~Chen, and Z.~Fang, ``A graph autoencoder approach to causal
  structure learning,'' \emph{arXiv preprint arXiv:1911.07420}, 2019.

\bibitem{brouillard2020differentiable}
P.~Brouillard, S.~Lachapelle, A.~Lacoste, S.~Lacoste-Julien, and A.~Drouin,
  ``Differentiable causal discovery from interventional data,'' \emph{arXiv
  preprint arXiv:2007.01754}, 2020.

\bibitem{lee2013local}
J.~Lee, S.~Kim, G.~Lebanon, and Y.~Singer, ``Local low-rank matrix
  approximation,'' in \emph{International conference on machine
  learning}.\hskip 1em plus 0.5em minus 0.4em\relax PMLR, 2013, pp. 82--90.

\bibitem{haeffele2014structured}
B.~Haeffele, E.~Young, and R.~Vidal, ``Structured low-rank matrix
  factorization: Optimality, algorithm, and applications to image processing,''
  in \emph{International conference on machine learning}.\hskip 1em plus 0.5em
  minus 0.4em\relax PMLR, 2014, pp. 2007--2015.

\bibitem{yang2016joint}
J.~Yang, D.~Parikh, and D.~Batra, ``Joint unsupervised learning of deep
  representations and image clusters,'' in \emph{Proceedings of the IEEE
  conference on computer vision and pattern recognition}, 2016, pp. 5147--5156.

\bibitem{yang2017towards}
B.~Yang, X.~Fu, N.~D. Sidiropoulos, and M.~Hong, ``Towards k-means-friendly
  spaces: Simultaneous deep learning and clustering,'' in \emph{international
  conference on machine learning}.\hskip 1em plus 0.5em minus 0.4em\relax PMLR,
  2017, pp. 3861--3870.

\bibitem{hochreiter1997long}
S.~Hochreiter and J.~Schmidhuber, ``Long short-term memory,'' \emph{Neural
  computation}, vol.~9, no.~8, pp. 1735--1780, 1997.

\bibitem{chung2014empirical}
J.~Chung, C.~Gulcehre, K.~Cho, and Y.~Bengio, ``Empirical evaluation of gated
  recurrent neural networks on sequence modeling,'' \emph{arXiv preprint
  arXiv:1412.3555}, 2014.

\bibitem{quadrana2017personalizing}
M.~Quadrana, A.~Karatzoglou, B.~Hidasi, and P.~Cremonesi, ``Personalizing
  session-based recommendations with hierarchical recurrent neural networks,''
  in \emph{Proceedings of the Eleventh ACM Conference on Recommender Systems},
  2017, pp. 130--137.

\bibitem{sun2019bert4rec}
F.~Sun, J.~Liu, J.~Wu, C.~Pei, X.~Lin, W.~Ou, and P.~Jiang, ``Bert4rec:
  Sequential recommendation with bidirectional encoder representations from
  transformer,'' in \emph{Proceedings of the 28th ACM international conference
  on information and knowledge management}, 2019, pp. 1441--1450.

\bibitem{spirtes2000causation}
P.~Spirtes, C.~N. Glymour, R.~Scheines, and D.~Heckerman, \emph{Causation,
  prediction, and search}.\hskip 1em plus 0.5em minus 0.4em\relax MIT press,
  2000.

\bibitem{meek2013causal}
C.~Meek, ``Causal inference and causal explanation with background knowledge,''
  \emph{arXiv preprint arXiv:1302.4972}, 2013.

\bibitem{zhang2008completeness}
J.~Zhang, ``On the completeness of orientation rules for causal discovery in
  the presence of latent confounders and selection bias,'' \emph{Artificial
  Intelligence}, vol. 172, no. 16-17, pp. 1873--1896, 2008.

\bibitem{chickering2002optimal}
D.~M. Chickering, ``Optimal structure identification with greedy search,''
  \emph{Journal of machine learning research}, vol.~3, no. Nov, pp. 507--554,
  2002.

\bibitem{chickering1997efficient}
D.~M. Chickering and D.~Heckerman, ``Efficient approximations for the marginal
  likelihood of bayesian networks with hidden variables,'' \emph{Machine
  learning}, vol.~29, no.~2, pp. 181--212, 1997.

\bibitem{heckerman1995learning}
D.~Heckerman, D.~Geiger, and D.~M. Chickering, ``Learning bayesian networks:
  The combination of knowledge and statistical data,'' \emph{Machine learning},
  vol.~20, no.~3, pp. 197--243, 1995.

\bibitem{bouckaert1993probabilistic}
R.~R. Bouckaert, ``Probabilistic network construction using the minimum
  description length principle,'' in \emph{European conference on symbolic and
  quantitative approaches to reasoning and uncertainty}.\hskip 1em plus 0.5em
  minus 0.4em\relax Springer, 1993, pp. 41--48.

\bibitem{BPR}
S.~{Rendle}, C.~{Freudenthaler}, Z.~{Gantner}, and L.~{Schmidt-Thieme}, ``{BPR:
  Bayesian Personalized Ranking from Implicit Feedback},'' \emph{arXiv
  e-prints}, p. arXiv:1205.2618, May 2012.

\bibitem{ncf}
\BIBentryALTinterwordspacing
X.~He, L.~Liao, H.~Zhang, L.~Nie, X.~Hu, and T.-S. Chua, ``Neural collaborative
  filtering,'' in \emph{Proceedings of the 26th International Conference on
  World Wide Web}, ser. WWW '17.\hskip 1em plus 0.5em minus 0.4em\relax
  Republic and Canton of Geneva, CHE: International World Wide Web Conferences
  Steering Committee, 2017, p. 173–182. [Online]. Available:
  \url{https://doi.org/10.1145/3038912.3052569}
\BIBentrySTDinterwordspacing

\bibitem{SASRec}
W.-C. Kang and J.~McAuley, ``Self-attentive sequential recommendation,'' in
  \emph{2018 IEEE International Conference on Data Mining (ICDM)}, 2018.

\bibitem{cui2016visual}
Q.~Cui, S.~Wu, Q.~Liu, and L.~Wang, ``A visual and textual recurrent neural
  network for sequential prediction,'' \emph{arXiv preprint arXiv:1611.06668},
  2016.

\bibitem{han2020sequential}
T.~Han, Y.~Tian, J.~Zhang, and S.~Niu, ``Sequential recommendation with a
  pre-trained module learning multi-modal information,'' in \emph{2020
  International Conferences on Internet of Things (iThings) and IEEE Green
  Computing and Communications (GreenCom) and IEEE Cyber, Physical and Social
  Computing (CPSCom) and IEEE Smart Data (SmartData) and IEEE Congress on
  Cybermatics (Cybermatics)}.\hskip 1em plus 0.5em minus 0.4em\relax IEEE,
  2020, pp. 611--616.

\bibitem{buckland1994relationship}
M.~Buckland and F.~Gey, ``The relationship between recall and precision,''
  \emph{Journal of the American society for information science}, vol.~45,
  no.~1, pp. 12--19, 1994.

\bibitem{li2021extra}
L.~Li, Y.~Zhang, and L.~Chen, ``Extra: Explanation ranking datasets for
  explainable recommendation,'' \emph{arXiv preprint arXiv:2102.10315}, 2021.

\end{thebibliography}

\end{document}